\pdfoutput=1
%\PassOptionsToClass{anonymous}{lipics-v2021}
\RequirePackage{setspace}% load before cleveref!
\documentclass[a4paper,USenglish,cleveref,autoref,thm-restate]{lipics-v2021}
\usepackage{acronym,tabularx}
%\input{header}
%%%%%%%%%%%%%%%%%%%%%%%%%%%%%%%%%%% header of made %%%%%%%%%%%%%%%%%%%%%%%%%%%%%
% ogre font: https://onlineasciitools.com/convert-text-to-ascii-art
% \@ifclassloaded{documentclass}{true}{false}

%   __ _
%  / _(_)_  _____  ___
% | |_| \ \/ / _ \/ __|
% |  _| |>  <  __/\__ \
% |_| |_/_/\_\___||___/
%

\makeatletter
\@ifclassloaded{llncs}{}{}% amsthm fix
\makeatother

%                   _                                  _   _
%  _ __   __ _  ___| | ____ _  __ _  ___    ___  _ __ | |_(_) ___  _ __  ___
% | '_ \ / _` |/ __| |/ / _` |/ _` |/ _ \  / _ \| '_ \| __| |/ _ \| '_ \/ __|
% | |_) | (_| | (__|   < (_| | (_| |  __/ | (_) | |_) | |_| | (_) | | | \__ \
% | .__/ \__,_|\___|_|\_\__,_|\__, |\___|  \___/| .__/ \__|_|\___/|_| |_|___/
% |_|                         |___/             |_|

\makeatletter
\@ifclassloaded{beamer}{\PassOptionsToPackage{most}{tcolorbox}}{}
\makeatother

\PassOptionsToPackage{T1}{fontenc}
\PassOptionsToPackage{toc}{appendix}
\PassOptionsToPackage{utf8}{inputenc}
\PassOptionsToPackage{capitalize}{cleveref}
\PassOptionsToPackage{noend}{algpseudocode}
\PassOptionsToPackage{shortlabels}{enumitem}

\makeatletter
\@ifclassloaded{moderncv}{}{\PassOptionsToPackage{unicode=true}{hyperref}}
\makeatother

%                   _
%  _ __   __ _  ___| | ____ _  __ _  ___  ___
% | '_ \ / _` |/ __| |/ / _` |/ _` |/ _ \/ __|
% | |_) | (_| | (__|   < (_| | (_| |  __/\__ \
% | .__/ \__,_|\___|_|\_\__,_|\__, |\___||___/
% |_|                         |___/

\makeatletter
%\@ifclassloaded{beamer}{\usepackage{tcolorbox}}{\usepackage{enumitem}}
\@ifclassloaded{acmart}{}{\usepackage{amssymb}}
\@ifclassloaded{lipics-v2021}{}{\usepackage{setspace}}
\makeatother

\usepackage{
  graphicx, xcolor, tikz,
  float,
  amsfonts,
  amsthm,
  mathtools,
  bm,
  braket, interval,
  booktabs, algorithm, algpseudocode, multirow,
  inputenc, fontenc,
  csquotes,
  thmtools, thm-restate
}

\makeatletter
\@ifclassloaded{moderncv}{}{
  \@ifclassloaded{IEEEtran}{}{\usepackage{appendix}}
  \@ifclassloaded{beamer}{}{
    \@ifclassloaded{lipics-v2021}{}{
      \usepackage{nameref,varioref,hyperref,cleveref}
    }
  }
}
\makeatother

%           _   _   _
%  ___  ___| |_| |_(_)_ __   __ _ ___
% / __|/ _ \ __| __| | '_ \ / _` / __|
% \__ \  __/ |_| |_| | | | | (_| \__ \
% |___/\___|\__|\__|_|_| |_|\__, |___/
%                           |___/

\makeatletter
\@ifclassloaded{IEEEtran}{}{
  \g@addto@macro\bfseries{\boldmath}%            use \boldmath in bold text
  \g@addto@macro\normalfont{\unboldmath}%        use \unboldmath in normal text
}
\@ifclassloaded{beamer}{
  %\beamertemplatenavigationsymbolsempty
  %\setbeamertemplate{footline}[frame number]{}
  \setbeamercolor{frametitle}{fg=black!70!white,bg=logobg}
  % Fix for algorithm: \begin{algorithm} = \begin{algorithm}[H]
  \let\@lgorithmforbe@merfix\algorithm
  \let\end@lgorithmforbe@merfix\endalgorithm
  \renewenvironment{algorithm}[1][H]{\@lgorithmforbe@merfix[#1]}{\end@lgorithmforbe@merfix}
  
  \tcbset{on line,boxsep=5pt,left=0pt,right=0pt,top=0pt,bottom=0pt,frame empty,frame hidden,colback=lightgray,highlight math style={enhanced}}%oversize=7pt
  \newcommand{\tcb}[2][]{\relax\ifmmode\tcbhighmath[#1]{#2}\else\tcbox[#1]{#2}\fi}%
  
}{}
\makeatother

\usetikzlibrary{decorations.pathreplacing,calligraphy,patterns,arrows.meta,external}
%\tikzexternalize% This must be set in the main file to work with overleaf!

\intervalconfig{scaled,soft open fences}

\makeatletter
\@ifpackageloaded{cleveref}{
\crefname{appsec}{Appendix}{Appendices}%         cleveref appendices
}{}
\newcommand\DeclareMathOperators[1]{%
  \@for\@ii:=#1\do{\expandafter\DeclareMathOperator\@ii}%
}%
\makeatother

%\renewcommand{\thealgorithm}{\arabic{section}.\arabic{algorithm}}
% subfigure:
%\captionsetup[subfigure]{subrefformat=simple,labelformat=simple}
%\renewcommand\thesubfigure{(\alph{subfigure})}

%                      _           _
%  ___ _   _ _ __ ___ | |__   ___ | |___
% / __| | | | '_ ` _ \| '_ \ / _ \| / __|
% \__ \ |_| | | | | | | |_) | (_) | \__ \
% |___/\__, |_| |_| |_|_.__/ \___/|_|___/
%      |___/

%           beautiful empty set
%                  epsilon
\newcommand{\Z}{\mathbb{Z}}%                     integer numbers
%                    natural numbers
\newcommand{\Q}{\mathbb{Q}}%                     rational numbers
\newcommand{\R}{\mathbb{R}}%                     real numbers
%          P
\newcommand{\NP}{\textsc{\textup{NP}}}%          NP

%
%  _ __ ___   __ _  ___ _ __ ___  ___
% | '_ ` _ \ / _` |/ __| '__/ _ \/ __|
% | | | | | | (_| | (__| | | (_) \__ \
% |_| |_| |_|\__,_|\___|_|  \___/|___/
%

\newcommand{\todo}[1]{\noindent\textcolor{purple}{\textsc{todo:} #1}}

\newcommand{\submit}{\renewcommand{\todo}[1]{}}

\makeatletter
\newcommand\useplaintitle{%                      to use a plain title
    \def\@maketitle{
    	\newpage
    	\null
    	\vskip 2em%
    	\begin{center}%
    		\let \footnote \thanks
    		{\LARGE \@title \par}%
    		\vskip 1.5em%
    		{\large
    			\lineskip .5em%
    			\begin{tabular}[t]{c}%
    				\@author
    			\end{tabular}\par}%
    		%\vskip 1em%
    		%{\large \@date}%
    	\end{center}%
    	\par
    	%\vskip 1.5em
    	\vskip 1em%new%
    }
}
\newcommand*{\currentname}{\@currentlabelname}%  \currentname holds current name
\makeatother
\newcommand{\num}{\addtocounter{equation}{1}\tag{\theequation}}
\newcommand{\reason}[1]{\tag*{\textcolor{gray}{#1}}}

\newcommand{\op}[1]{\operatorname{#1}}%          fast operator naming \op{name}

\DeclarePairedDelimiter\ceil{\lceil}{\rceil}%    \ceil{foo}  or \ceil*{foo}
\DeclarePairedDelimiter\floor{\lfloor}{\rfloor}% \floor{foo} or \floor*{foo}
\DeclarePairedDelimiter\norm{\lVert}{\rVert}%    \norm{v}    or \norm*{v}
%     \abs{foo}   or \abs*{foo}
%  \brac{foo} or \brac*{foo}

%      disjoint union
\makeatletter
\@ifclassloaded{acmart}{}{}% big disjoint union
\makeatother

\newcommand{\overbar}[1]{\mkern 1.5mu\overline{\mkern-1.5mu#1\mkern-1.5mu}\mkern 1.5mu}

\newcommand{\para}[1]{\noindent\textbf{#1}\,\,}

\DeclareMathOperators{
  \Oh{\mathcal{O}},\oh{o},
  \OPT{OPT},\chp{chp},\load{L},
  \ks{ks},\cks{cks},
  \size{size},\pop{pop},\push{push},\parent{parent},
  \lcm{lcm},\supp{supp},\sign{sign},\poly{poly}
}

\acrodef{RTA}[\textsc{rta}]{Response Time Analysis}
\acrodef{FP}[\textsc{fp}]{fixed-priority}
\acrodef{DM}[\textsc{dm}]{deadline monotonic}
\acrodef{RM}[\textsc{rm}]{rate monotonic}
\acrodef{SAS}[\textsc{sas}]{synchronous arrival sequence}
%\input{lineno-fix}
%\tikzexternalize% This must be set in the main file to work with overleaf!
\usetikzlibrary{calc,math}

\title{The Power of Duality: Response Time Analysis meets Integer Programming}

\author{Max A. Deppert}{Kiel University, Kiel \and \url{https://max.deppert.de}}{made@informatik.uni-kiel.de}{https://orcid.org/0000-0003-3083-7998}{}% mandatory, please use full name; only 1 author per \author macro; first two parameters are mandatory, other parameters can be empty. Please provide at least the name of the affiliation and the country. The full address is optional. Use additional curly braces to indicate the correct name splitting when the last name consists of multiple name parts.
\author{Klaus Jansen}{Kiel University, Kiel}{kj@informatik.uni-kiel.de}{https://orcid.org/0000-0001-8358-6796}{}

\authorrunning{M.\,A. Deppert and K. Jansen} % mandatory. First: Use abbreviated first/middle names. Second (only in severe cases): Use first author plus 'et al.'

\Copyright{Max A. Deppert and Klaus Jansen} % mandatory, please use full first names. LIPIcs license is "CC-BY";  http://creativecommons.org/licenses/by/3.0/

%\ccsdesc[500]{Theory of computation~Approximation algorithms analysis}
%\ccsdesc[500]{Theory of computation~Parallel algorithms}
\ccsdesc[500]{Theory of computation~Discrete optimization}
\ccsdesc[500]{Theory of computation~Integer programming}
\ccsdesc[500]{Theory of computation~Scheduling algorithms}
%TODO mandatory: Please choose ACM 2012 classifications from https://dl.acm.org/ccs/ccs_flat.cfm 

% \newcommand{\email}[1]{\href{mailto:#1}{\small\texttt{#1}}}
% \author{\IEEEauthorblockN{Some Author(s)}
% \IEEEauthorblockA{\textit{Some Organization/Affiliation}\\
% Somewhere\\
% \email{mail@example.com}}
% }
% \author{\IEEEauthorblockN{Max A. Deppert}
% \IEEEauthorblockA{\textit{Kiel University}\\
% Kiel, Germany\\
% \email{made@informatik.uni-kiel.de}}
% \and
% \IEEEauthorblockN{Klaus Jansen}
% \IEEEauthorblockA{\textit{Kiel University}\\
% Kiel, Germany\\
% \email{kj@informatik.uni-kiel.de}}
% }

\keywords{integer programming, scheduling, real-time, response time, mixing set, harmonic periods}

\category{} %optional, e.g. invited paper

\relatedversion{} %optional, e.g. full version hosted on arXiv, HAL, or other respository/website
%\relatedversiondetails[linktext={opt. text shown instead of the URL}, cite=DBLP:books/mk/GrayR93]{Classification (e.g. Full Version, Extended Version, Previous Version}{URL to related version} %linktext and cite are optional

%\supplement{}%optional, e.g. related research data, source code, ... hosted on a repository like zenodo, figshare, GitHub, ...
%\supplementdetails[linktext={opt. text shown instead of the URL}, cite=DBLP:books/mk/GrayR93, subcategory={Description, Subcategory}, swhid={Software Heritage Identifier}]{General Classification (e.g. Software, Dataset, Model, ...)}{URL to related version} %linktext, cite, and subcategory are optional

\funding{Research supported by German Research Foundation (DFG) project JA 612/25-1}%optional, to capture a funding statement, which applies to all authors. Please enter author specific funding statements as fifth argument of the \author macro.

\acknowledgements{We want to thank Werner Grass, whose numerous comments and suggestions have helped us greatly to improve our paper.}%optional

\nolinenumbers %uncomment to disable line numbering
\hideLIPIcs

%Editor-only macros:: begin (do not touch as author)%%%%%%%%%%%%%%%%%%%%%%%%%%%%%%%%%%
\EventEditors{John Q. Open and Joan R. Access}
\EventNoEds{2}
\EventLongTitle{42nd Conference on Very Important Topics (STACS 2024)}
\EventShortTitle{STACS 2024}
\EventAcronym{STACS}
\EventYear{2024}
\EventDate{December 24--27, 2016}
\EventLocation{Little Whinging, United Kingdom}
\EventLogo{}
\SeriesVolume{42}
\ArticleNo{23}
%%%%%%%%%%%%%%%%%%%%%%%%%%%%%%%%%%%%%%%%%%%%%%%%%%%%%%

\newcommand\response{\mathscr{R}}
\newcommand\responsedual{\overbar{\mathscr{R}}}
\newcommand\mixing{\mathscr{M}}
\newcommand\jitter{\eta}

% nice colors
\definecolor{TolDarkPurple}{HTML}{332288}
\definecolor{TolLightPurple}{HTML}{AA4499}
\definecolor{TolDarkBlue}{HTML}{6699CC}
\definecolor{TolLightBlue}{HTML}{88CCEE}
\definecolor{TolLightGreen}{HTML}{44AA99}
\definecolor{TolDarkGreen}{HTML}{117733}
\definecolor{TolDarkBrown}{HTML}{999933}
\definecolor{TolLightBrown}{HTML}{DDCC77}
\definecolor{TolDarkRed}{HTML}{661100}
\definecolor{TolLightRed}{HTML}{CC6677}
\definecolor{TolLightPink}{HTML}{AA4466}
\definecolor{TolDarkPink}{HTML}{882255}

\algnewcommand{\IfThenElse}[3]{% \IfThenElse{<cond>}{<then>}{<else>}
  \State \algorithmicif\ #1\ \algorithmicthen\ #2\ \algorithmicelse\ #3}

\submit% hides everything which still is to do

\begin{document}%\linenumbers

\maketitle

\begin{abstract}
    We study a mutually enriching connection between response time analysis in real-time systems and the mixing set problem.
    Thereby generalizing over known results we present a new approach to the computation of response times in fixed-priority uniprocessor real-time scheduling. We even allow that the tasks are delayed by some period-constrained release jitter. By studying a dual problem formulation of the decision problem as an integer linear program we show that worst-case response times can be computed by algorithmically exploiting a conditional reduction to an instance of the mixing set problem.
    In the important case of harmonic periods our new technique admits a near-quadratic algorithm to the exact computation of worst-case response times. We show that generally, a smaller utilization leads to more efficient algorithms even in fixed-priority scheduling.
    Worst-case response times can be understood as least fixed points to non-trivial fixed point equations and as such, our approach may also be used to solve suitable fixed point problems.
    Furthermore, we show that our technique can be reversed to solve the mixing set problem by computing worst-case response times to associated real-time scheduling task systems.
    Finally, we also apply our optimization technique to solve 4-block integer programs with simple objective functions.
\end{abstract}

\clearpage

\section{Introduction}

\todo{cite 
Bini and Buttazzo \cite{DBLP:journals/tc/BiniB04},
Baruah et al. \cite{DBLP:conf/rtss/BaruahE022}
}

We consider fixed-priority uniprocessor real-time task scheduling with constrained deadlines and task release jitters in the \emph{sporadic task model} \cite{DBLP:phd/ndltd/Mok83} which is a common model to analyze real-time task systems.

%Cite: Worst-Case Temperature Analysis for Real-Time Systems. Rai et al.

% As also pointed out by Bonifaci et al. \cite{DBLP:conf/rtss/BonifaciMMW13}, harmonic periods are usual and important cases in practice \cite{harmonic case important}, but the complexity of response time computation was poorly understood in this case.

%We can check the schedulability of fixed-priority real-time task systems with harmonic periods in polynomial time even for \emph{arbitrary deadlines} and \emph{arbitrary release jitter}!

\paragraph*{Model}

A sporadic task system $\mathscr{T}=\set{\tau_1,\dots,\tau_n}$ is a set of sporadic tasks $
\tau_i$ which are given as a quadruple of non-negative integers $\tau_i = (c_i,d_i,p_i,\jitter_i)$ where $c_i$ represents the task's worst-case execution time, $d_i$ its relative deadline, $p_i$ its period, and $\jitter_i$ its release jitter. The task $\tau_i$ generates an infinite sequence of jobs. Each job has an execution time of at most $c_i \geq 1$ time units and a deadline exactly $d_i$ time units after its arrival time, where $c_i \leq d_i \leq p_i$. The jobs of the task are considered to arrive separated in time by at least $p_i\geq 1$ time units. However, a job is not ready for execution until it is \emph{released}. The time difference between the arrival time and the release time is called \emph{release jitter}. Hence,
in jitter-free systems the jobs are released in the moment they arrive, i.e.
arrival time equals release time.
The release jitter $\jitter_i$ of the task is the maximum time difference between the arrival times and the release times over all jobs of $\tau_i$. Naturally, one can assume that $\jitter_i \leq d_i-c_i \leq p_i-c_i$.

The task system $\mathscr{T}$ has \emph{harmonic periods} iff $p_i \geq p_j$ implies that $p_i/p_j\in\Z$ for all tasks $\tau_i,\tau_j$.
The \emph{utilization} of a task $\tau_i$ is the quantity $c_i/p_i$ and the utilization of the task set $\mathscr{T}$ is $\sum_{i\leq n}c_i/p_i$. The common assumption is that $\sum_{i\leq n}c_i/p_i \leq 1$ as otherwise there are job sequences of $\mathscr{T}$ which can not be scheduled by any algorithm. We refer to this utilization bound as the \emph{schedulability utilization bound}. However, a necessary utilization bound is $\sum_{i<n}c_i/p_i < 1$ as otherwise there is no solution for worst-case response times at all. We refer to this bound as the \emph{general utilization bound}.

\paragraph*{Fixed-Priority Scheduling}

\begin{figure*}
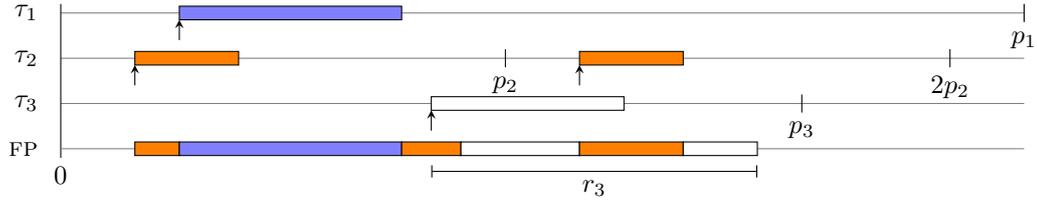

    \centering
    \tikz[xscale=0.195,yscale=0.6]{
        \def\padding{.15}
        \def\ind{.2}
        \foreach [count=\i] \offset/\c/\p/\jit/\periods/\ops in {
            0/15/65/8/1/[fill=white!50!blue],
            0/7/30/5/2/[fill=orange],
            0/13/50/25/1/[fill=white]
        } {
            %\draw (0,-\i) node [left=5] {$\tau_{\i}$} -- (\periods*\p,-\i);
            \draw [gray] (0,-\i) node [left=5,black] {$\tau_{\i}$} -- (65,-\i);
            \foreach \j in {1,...,\periods} {
                \def\periodcoefficient{\ifnum\j>1{\j}\else{}\fi}
                \draw (\offset+\j*\p,-\i+\ind) -- (\offset+\j*\p,-\i-\ind) node [below] {$\periodcoefficient p_{\i}$};
                \draw [-stealth] (\offset+\j*\p-\p+\jit,-\i-4*\padding) -- (\offset+\j*\p-\p+\jit,-\i-\padding);
                \draw \ops (\offset+\j*\p-\p+\jit,-\i-\padding) rectangle (\offset+\j*\p-\p+\jit+\c,-\i+\padding);
            }
        }
        \draw (0,-1+\ind) -- (0,-4-\ind) node [below] {$0$};
        \draw [gray] (0,-4) node [left=5,black] {\textsc{fp}} -- (65,-4);
        \draw [fill=orange] (5,-4-\padding) rectangle (8,-4+\padding);
        \draw [fill=white!50!blue] (8,-4-\padding) rectangle (23,-4+\padding);
        \draw [fill=orange] (23,-4-\padding) rectangle (27,-4+\padding);
        \draw [fill=white] (27,-4-\padding) rectangle (35,-4+\padding);
        \draw [fill=orange] (35,-4-\padding) rectangle (42,-4+\padding);
        \draw [fill=white] (42,-4-\padding) rectangle (47,-4+\padding);
        \draw [|-|] (25,-4.5) -- node [below] {$r_3$} (47,-4.5);
    }
    \caption{An example instance $\mathscr{T} = \set{\tau_1,\tau_2,\tau_3}$ with $\tau_1 = (15,d_1,65,8)$, $\tau_2 = (7,d_2,30,5)$, $\tau_3 = (13,d_3,50,25)$ and the resulting \acs{FP} schedule. The arrows indicate the release times of the jobs and the colors distinguish between jobs of different tasks.}
    \label{fig:example}
\end{figure*}

In \ac{FP} scheduling (or static-priority scheduling) the tasks are considered to have a predefined order of prioritization. A task $\tau_i$ has a larger priority than a task $\tau_j$ if and only if $i<j$. An \ac{FP} schedule preempts jobs according to these priorities; in more detail, a running job is preempted on the release of a job of task of a higher priority to schedule the new job instead (see \cref{fig:example} for an example). The task system is called \emph{schedulable} when no job can ever miss its deadline. Of course one may try to find a good prioritization for a set of tasks. However, the more typical problem is to analyze a task system with a given prioritization which is what we do in this paper.

Two important special cases of \ac{FP} scheduling are \ac{DM} scheduling \cite{DBLP:journals/rts/Baruah11,AUDSLEY1991127} and \ac{RM} scheduling \cite{DBLP:conf/rtss/Ekberg20,DBLP:journals/jacm/LiuL73,DBLP:conf/afips/Serlin72}.
In \ac{DM} scheduling earlier deadlines imply larger priorities while in \ac{RM} scheduling smaller periods imply larger priorities. Both are equal if the deadline of each task is equal to its period.

\paragraph*{Response Time Analysis and Schedulability Testing}

The \emph{response time} of a task is the maximum time difference between the release of a job and its completion. In a worst-case scenario we refer to them as \emph{worst-case response times} and these may be used to decide the schedulability of a system. Since task release jitters intensify the complexity of worst-case response times and schedulability testing as well as other types of analysis, they are often assumed to be zero, i.e. $\jitter_i=0$ for all $i\leq n$.
As pointed out by Liu and Layland \cite{DBLP:journals/jacm/LiuL73}, in such \emph{jitter-free} systems the \emph{critical instant} occurs when the jobs of all tasks $\tau_i$ are released simultaneously. See also \ac{SAS} \cite{DBLP:reference/crc/BaruahG04,DBLP:journals/cj/JosephP86}.
Therefore, without release jitters the worst-case response time of task $\tau_j$ is the earliest point in time $r_j$ where
\[
    \textstyle
    r_j \geq c_j + \sum_{i=1}^{j-1}c_i\ceil{r_j/p_i}.
\]
This inequality models the situation that all jobs are released at time $0$.
If $r_j \leq d_j$, then task $t_j$ completes its job on time which means that task $\tau_j$ is schedulable. If $r_j \leq d_j$ for all tasks $t_j$, then the task system $\mathscr{T}$ is schedulable using the given priorities.

%\paragraph*{Systems with release jitters}

In this paper we consider systems with release jitters. Audsley et al. \cite{DBLP:journals/iee/AudsleyBRTW93} and Tindell et al. \cite{DBLP:journals/rts/TindellBW94} have discussed the release jitter problem in detail (see also \cite{DBLP:conf/rtss/SjodinH98,DBLP:conf/rtss/BiniPD15,DBLP:conf/rtss/ChenHL16,DBLP:journals/rts/GrassN18,DBLP:conf/ecrts/BoyerRDP21}). Their work leads to the following understanding of worst-case response times in systems with task release jitters.
The worst-case response time of task $\tau_j$ is the earliest point in time $r_j$ where
\[
    %\textstyle
    r_j \geq c_j + \sum_{i=1}^{j-1}c_i\ceil{(r_j+\jitter_i)/p_i}.
\]
The task is schedulable if $r_j \leq d_j-\jitter_j$ and the task set $\mathscr{T}$ is schedulable if $r_j \leq d_j-\jitter_j$ for all tasks $\tau_j$.
Without loss of generality we can assume that the execution requirement of each job of a task $\tau_i$ always equals the task's worst-case execution time $c_i$. Also we can assume that the $i$-th job of a task $\tau_j$ is released in the time interval $[(i-1)p_j,(i-1)p_j+\jitter_j]$ and its due time is $(i-1)p_j+d_j$.
Furthermore, we may restrict our interest to the response time $r_n$ of the last task $\tau_n$:
\[\num\label{wcrt-with-jitters-original}
    r_n = \min\set{t|t \geq c_n + \sum_{i<n} c_i\ceil[\Big]{\frac{t+\jitter_i}{p_i}}, t \in \Z_{\geq 0}}.
\]
We refer to the problem of the computation of $r_n$ as \emph{response time computation} (RTC).

For \ac{FP} scheduling, RTC is NP-hard in general due to Eisenbrand and Rothvoß \cite{DBLP:conf/rtss/EisenbrandR08} even for $\jitter=0$ and it is even hard to approximate within a constant factor. However, for jitter-free task systems with harmonic periods it is polynomial which is a great result by Bonifaci~et~al.~\cite{DBLP:conf/rtss/BonifaciMMW13} and it may even be computed in near-linear time, cf. Nguyen~et~al.~\cite{DBLP:conf/iwoca/NguyenGJ22}.

%One has to be aware that the schedulability of task $\tau_n$ generally can be decided in pseudo-polynomial time $\Oh(n\cdot d_n)$ by simply guessing a solution $t$ to \eqref{wcrt-with-jitters-original} in the interval $(0,d_n-\jitter_n]$.
% Davis and Burns \cite{DBLP:conf/rtss/DavisB08}
% \noindent ARBITRARY DEADLINES
% Clearly, testing schedulability is a major use case of response time computation \cite{schedulability is major use case}. If a system is not schedulable anyway, it is less important to know the exact worst-case response times. Rather it suffices to know that they do or do not miss a certain deadline. Therefore, the deadlines can be used as implicit upper bounds on the worst-case response time of a task. These can be used to check schedulability more efficiently than computing the worst-case response times.
% However, in the case of arbitrary deadlines (not necessarily bounded by their periods), these benefits for schedulability testing get less important.
The computation of response times in real-time systems is often treated as a fixed point problem, where one aims for the least fixed point to the fixed point equation $\Phi(t) = t$ where $\Phi(t) = c_n+\sum_{i<n}c_i\ceil{(t+\jitter_i)/p_i}$. The computation of least fixed points (e.g. \cite{DBLP:books/daglib/0095988,DBLP:books/sp/Libkin04}) is an important topic in order theory and logic with many applications, for example in denotational semantics. As such, the new approach which we present in this paper may also be used to find least fixed points of suitable fixed point problems .

\paragraph*{Mixing Set}

In the \textsc{Mixing Set} problem one is given capacities $a\in\Q^n$ and a right-hand side $b\in\Q^n$ and the goal is to find a solution $(s,x)\in\R_{\geq 0}\times\Z^n$ which optimizes a linear objective function while satisfying the following system of inequalities:
\[
    s+a_ix_i \geq b_i \quad \forall i\in\set{1,\dots,n}
\]
The problem is \NP-hard due to Eisenbrand and Rothvoß \cite{DBLP:conf/approx/EisenbrandR09}. However, it can be solved in polynomial time in the case of unit capacities \cite{DBLP:journals/mp/GunlukP01,DBLP:journals/mp/MillerW03} or harmonic capacities \cite{DBLP:journals/mp/ZhaoF08} (see also \cite{DBLP:conf/ipco/ConfortiSW08,DBLP:journals/orl/ConfortiZ09} for simpler approaches). The \textsc{Mixing Set} problem plays an important role in production planning (in particular lot-sizing \cite{Pochet:2006:PPM:1202598}).

\paragraph*{Related Work}

In the early '70s it was independently shown by Liu and Layland \cite{DBLP:journals/jacm/LiuL73} and Serlin \cite{DBLP:conf/afips/Serlin72}, that a set of $n$ implicit-deadline (i.e. $d_i=p_i$) synchronous periodic tasks is always schedulable in \ac{RM} scheduling if the task's total utilization is bounded by $n(2^{1/n}-1)$ which is $\ln(2)\approx 0.69$ for $n\rightarrow\infty$. Recently, this seminal result was complemented by Ekberg \cite{DBLP:conf/rtss/Ekberg20} who proved that $\ln(2)$ is in fact the boundary between polynomial and NP-hard schedulability testing; the \ac{FP} schedulability problem is NP-hard even if it is restricted to task sets with \ac{RM} priorities and utilization-bounded from above by any constant truly larger than $\ln(2)$.

The Liu and Layland problem was investigated further for example by Kubiak \cite{DBLP:journals/scheduling/Kubiak05} who solved it via bottleneck just-in-time sequencing and Józefowska et al. \cite{DBLP:journals/eor/JozefowskaJK09} who studied apportionment divisor methods.

Baruah \cite{DBLP:conf/rtss/Baruah20} introduced the idea of ILP-tractability of schedulability analysis problems and in fact, our results prove tractability in his sense. \todo{specify}

As mentioned earlier, Eisenbrand and Rothvoß proved hardness results for both RTC \cite{DBLP:conf/rtss/EisenbrandR08} and the \textsc{Mixing Set} problem \cite{DBLP:conf/approx/EisenbrandR09}; in more detail, they proved reductions from directed diophantine approximation \emph{to} these problems but they did not give any reductions from them or \emph{between} them, which is what we have found.

\begin{table*}
    \centering
    \begin{tabularx}{\linewidth}{p{3cm}|X|p{4cm}}
        \toprule
            \textsc{RTC} & harmonic periods & arbitrary periods
        \\\hline
            $\jitter_i \leq p_i$
            &
            \textbf{$\Oh(n^2\log p_{\max})$} *
            &
            $\Oh(n\lcm_{i<n}p_i)$ *\newline
            $\Oh(\max(Sn,T_{\mixing}\log p_{\max}))$ *
        \\\hline
            $\jitter_i=0$
            &
            $\Oh(n\log(n+p_{\max}))$ \cite{DBLP:conf/rtss/BonifaciMMW13}, $\Oh(n\log n)$ \cite{DBLP:conf/iwoca/NguyenGJ22}
            &
            $\Oh(T_{\mixing}\log p_{\max})$ *
        \\\midrule
            \textsc{Mixing Set} & harmonic capacities & arbitrary capacities
        \\\hline
            $b_i \in \Z$
            &
            $\Oh(n^2)$ \cite{DBLP:journals/orl/ConfortiZ09}
            &
            $\Oh(n\lcm_{i\leq n}a_i)$ *\newline
            $\Oh(T_{\response}\log b_{\max})$ *
        \\\hline
            $b_i = \beta \geq \lcm_{j\leq n} a_j$
            &
            $\Oh(n\log(n+\beta))$ *
            &
            $\Oh(T_{\response}\log\beta)$ *
        \\\bottomrule
    \end{tabularx}
    \medskip
    \caption{\normalshape Overview of algorithmic results for RTC and \textsc{Mixing Set}. Here $S < (\sum_{i<n}c_i)/(1-U)$ for utilization $U=\sum_{i<n}c_i/p_i$ is an upper bound on the value of $s$ in any optimal solution for a certain instance of \textsc{Mixing Set}, and $T_{\mixing}$ and $T_{\response}$ are upper bounds on the time required to solve an instance of \textsc{Mixing Set} or RTC, respectively.\hfill * Result is in this paper}
    \label{tab:results-overview}
\end{table*}

\paragraph*{Our Contribution}
Here we summarize our results.
\begin{enumerate}[(a)]
    \item\label{dualision-teaser}
    %\para{Dualision}
    We present a simple dualization technique which allows to solve inequality-constrained optimization problems in a binary search by consecutively solving a dual formulation of the associated decision problem (see \cref{dualision}).
    \item \para{Response Times}
    \begin{enumerate}[1.]
        \item We establish new bounds for worst-case response times in \ac{FP} real-time task systems with period-constrained task release jitter (see \cref{sec:response-time-bounds}).
        \item We present a conditional Karp reduction from RTC to the \textsc{Mixing Set} problem (see \cref{conditional-karp-reduction}). We show how it can be used to derive a bunch of algorithmic results for both problems (see \cref{tab:results-overview} for an overview).
        \item For the important case of harmonic periods we prove a Cook reduction from RTC to \textsc{Mixing Set} (see \cref{wcrt-harmonic-periods}).
        \item We can even cope with the general case of \emph{arbitrary} periods. Especially, we give a Turing reduction from RTC to \textsc{Mixing Set} and we study the dependence of the running times on the utilization (see \cref{wcrt-arbitrary-periods}).
    \end{enumerate}
    \item \para{Mixing Set} We can reverse our methodology to solve the \textsc{Mixing Set} problem by computing worst-case response times for associated real-time task systems (see \cref{sec:mixing-set}).
    \item \para{4-block Programs} Finally, we show how the dualization technique can by applied to solve certain 4-block integer programs (see \cref{application-to-4-block}) which are NP-hard to approximate to any constant factor (see \cref{4-block-hardness}).
\end{enumerate}

% \begin{figure}
%     \centering
%     \tikz{
%         \begin{axis}[
%             axis lines=middle,
%             xtick={0,2,4,6,8,10,12,14,16,18},
%             x=0.65cm,
%             y=0.2cm
%         ]
%             \addplot [
%                 jump mark mid,
%                 domain=0:19,
%                 samples=100,
%                 very thick, black
%             ] {2+2*ceil((x+1)/6)+ceil((x+1)/3)};
%             \addplot [
%                 domain=0:18
%             ] {x};
%             \fill [color=gray,opacity=0.1] (12,0) rectangle node [opacity=1,black] {\scriptsize Bonifaci et al.} (18,18);
%             \fill (10,10) circle (1.5pt) node [above=2] {$r_n$};
%             \fill (13,13) circle (1.5pt);
%             \draw [dotted] (0,10) -- (10,10) -- (10,0);
%         \end{axis}
%     }
%     \caption{\cref{bonifaci-fails} for $k=1$ where $r_n = 10$. The work function $w$ is drawn thick and the diagonal line is the identity function, whose intersections with $w$ identify the points in time where the response time inequality is satisfied with equality. The gray area indicates the interval where the algorithm of Bonifaci et al. searches for $r_n$}
%     \label{fig:bonifaci-fails}
% \end{figure}

\paragraph*{RTC for Task Systems with Harmonic Periods}

% There are two main insights. The first one is the conditional Karp reduction and the
% second one is the algorithmic idea to overcome the dependency "k >= S" in the
% case of harmonic periods. In fact, the resulting algorithm is simple and we want
% to highlight that! However, it is a nontrivial idea to walk from 0 to u over the
% differences p_i-eta_i to reveal some variable values which can guarantee the
% condition "k >= S". We think that the resulting running times are optimal (with
% respect to the running time to solve mixing set) but we did not find a proof
% yet.

%Unfortunately, in general the function $f(t) = t - c_n - \sum_{i<n}c_i\ceil{(t+\jitter_i)/p_i}$ is non-monotonic in $t$. This rules out primitive binary search approaches to search for a smallest $t$ with $f(t) \geq 0$. Like the algorithm of Bonifaci et al. \cite{DBLP:conf/rtss/BonifaciMMW13} it is necessary to 

% To get used to the problem and to see that the presence of task release jitters indeed increases the complexity, see \cref{fig:bonifaci-fails} and \cref{bonifaci-fails}, respectively. In the presence of task release jitters, a direct application of the algorithm of Bonifaci et al. fails, which is described in the example.

We want to highlight our result for harmonic periods, as it gives an efficient algorithm to a well-studied case of a long-standing problem. On the one hand, the release jitter problem was studied for about 30 years now (e.g.
\cite{
    % systems with jitters
    DBLP:journals/iee/AudsleyBRTW93,%
    DBLP:journals/rts/TindellBW94,%
    DBLP:conf/rtss/BaruahCM97,%
    DBLP:conf/rtss/DavisB08,%
    DBLP:conf/rtss/BiniPD15%
}), and also harmonic periods have been studied (e.g.
\cite{
    % systems with harmonic periods
    DBLP:journals/tc/KuoM97,%
    DBLP:conf/rtss/BonifaciMMW13,%
    d7c4f1a1-dce5-4ef9-a888-3089d808cec5,%
    % systems with harmonic periods and jitters
    DBLP:conf/iwoca/NguyenGJ22%
}) but on the other hand and to the best of our knowledge, there was no polynomial-time algorithm known before. Also remark that \ac{FP} scheduling superceeds the generality of \ac{DM} and \ac{RM} scheduling and in fact, we never rely on their properties. In the following we give a brief overview about our algorithm for harmonic periods.

\begin{theorem}\label{main-result}
    For systems with task release jitters and harmonic periods, the worst-case response time $r_n$ can be computed in time $\Oh(n^2\log p_{\max})$.
\end{theorem}

We consider the decision problem $r_n \leq_? k$ for which we present a conditional Karp reduction to the \textsc{Mixing Set} problem in constant time. The reduction is conditional in the sense that it works for large values of $k$ only, i.e. $k \geq S$ for some value $S$ which bounds the value of the variable $s$ in any optimal solution to the arising instances of \textsc{Mixing Set}. However, we show that $S$ can be bounded by the worst-case response time $r_n'$ for the simplified instance without release jitters and as this response time itself is a lower bound of $r_n$, i.e. $r_n' \leq r_n$, this suffices to run a binary search for $r_n$ by solving instances of \textsc{Mixing Set}.
% In the case of harmonic periods we managed to develop an algorithm which discovers variable values which allow to reduce $S$ such that smaller values for $k$ can be decided. To this end we prove a structural property, which, given only a rough knowledge about $r_n$, reveals the exact values of the variables for the tasks of largest period. The algorithm then works in two steps. First, we do a walk over the differences $p_i-\jitter_i$ in non-decreasing order to identify two consecutive differences $k_i = p_i-\jitter_i$ and $k_{i+1} = p_{i+1}-\jitter_{i+1}$ such that $r_n\in(k_i,k_{i+1}]$. Here, the order is of utmost importance, as we rely on the information of the previous decisions. In the second step we utilize a binary search to compute $r_n$.
Together, this yields a Cook reduction to \textsc{Mixing Set}.

\paragraph*{Notation}

We define $[n]=\set{z\in\Z|1\leq z \leq n}$ to denote the set of the first $n$ natural numbers. Given a vector $v=(v_1,\dots,v_n)\in\R^n$ we denote the largest or smallest entry in $v$ by $v_{\max} = \max_{i\leq n}v_i$ or $v_{\min} = \min_{i\leq n}v_i$, respectively. For positive integers $z_1,\dots,z_n$ let $\lcm_{i\leq n} z_i$ denote the least common multiple of the numbers $z_1,\dots,z_n$. For sizes $a,b$ we write $a \leq_? b$ to denote the decision problem to decide whether $a \leq b$ is true or not, i.e. the correct answer to $a \leq_? b$ is \textsc{yes} if $a \leq b$ and \textsc{no} otherwise. In the same manner we define $a \geq_? b$.

Whenever we specify the computation time of an algorithm in $\Oh$-notation, we quantify its total number of arithmetic operations on numbers from the input. With $\log=\log_2$ we always refer to the \emph{binary} logarithm.

% \paragraph*{Jitters are non-trivial}
% At first sight one might guess that the computation of worst-case response times in the presence of release jitter should be very similar to the setting of its absence. 

% \begin{figure}[h]
%     \centering
%     \tikz[xscale=1.5]{
%         \draw (-1,0) -- (8,0);
%         \draw (0,.2) -- (0,-.2) node [below] {$(q-1)p_{j'}$};
%         \draw (1,-.1) -- (1,.1) node [above] {$r_n$};
%         \draw (3,-.1) -- (3,.1) node [above] {$ap_j$};
%         \draw (5,-.1) -- (5,.1) node [above] {$r_n+\jitter_{j'}$};
%         \draw (6,.2) -- (6,-.2) node [below] {$qp_{j'}$};
%         \draw (7,-.1) -- (7,.1) node [above] {$ap_j+\jitter_{j'}$};
%     }
%     \caption{Setting where the algorithm of Bonifaci et al. \cite{DBLP:conf/rtss/BonifaciMMW13} fails}
% \end{figure}
% Consider the case that the release jitter of task $\tau_{j'}$ is in $(qp_{j'}-ap_j,qp_{j'}-r_n]$. It follows that
% \[
%     \ceil*{\frac{ap_j+\jitter_{j'}}{p_{j'}}}
%     \geq \ceil*{\frac{qp_{j'}-\jitter_{j'}+1+\jitter_{j'}}{p_{j'}}}
%     = q+1
%     > q
%     = \ceil*{\frac{qp_{j'}-\jitter_{j'}+\jitter_{j'}}{p_{j'}}}
%     \geq \ceil*{\frac{r_n+\jitter_{j'}}{p_{j'}}}
% \]

\section{Preliminaries}

Here we introduce a dualization technique for optimization problems. Also we show first insights about the \textsc{Mixing Set} problem and present some handy bounds for response times in real-time systems.

\subsection{A Dualization Technique}\label{dualision}

We present a simple optimization technique by studying a dual formulation of the associated decision problem. Consider a general optimization problem as follows.
Let $X$ be a set and let $f:X\rightarrow\Z$ and $g:X\rightarrow\R$ be two computable functions and we seek to minimize $f(x)$ over all $x\in X$ while $g(x) \geq b$ for some right-hand side $b\in\R$ and other constraints $G(x)$ have to be satisfied, i.e. we aim to compute
\[
    \sigma = \min\set{f(x)|g(x)\geq b,\;G(x),\;x \in X}.
\]
For an integer $k$ it holds that $\sigma \leq k$ if and only if
\[\num\label{dual-program-technique-dual-decision}
    \max\set{g(x)|f(x)\leq k,\;G(x),\;x\in X} \geq b.
\]
Hence, if we can bound $\sigma$ to an interval $[\ell,u]$ and have an algorithm to decide \eqref{dual-program-technique-dual-decision} in at most time $T$ for each $k\in[\ell,u]$, then we can compute $\sigma$ in time $\Oh(T\log(u-\ell))$ by using a binary search for the smallest feasible $k$ in $[\ell,u]$.
At first glance this does not help a lot, as we have replaced one optimization problem by another. However, if $f$ is a function of low complexity, then in order to decide \eqref{dual-program-technique-dual-decision} we can try to utilize an algorithm for \(\max\set{g(x)|G(x),x\in X}\) with respect to the simple additional constraint \(f(x)\leq\kappa\) (which is of low complexity). Remark that the approach may be adapted to approximate $\sigma$ for a function $f:X\rightarrow\R$ as well. However, in this paper we only care about integer-valued objectives. We apply this technique in \cref{response-time-computation,application-to-4-block}.

\subsection{Mixing Set}

Although \textsc{Mixing Set} can be solved in more general settings, for the scope of this paper we will always assume that $a\in\Z_{\geq 1}^n$ and $b\in\Z^n$. In more detail, given capacities $a\in\Z_{\geq 1}^n$, a right-hand side $b\in\Z^n$, and weights $w\in\Z_{\geq 0}^n$ (and $w_0\in\Z_{\geq 0}$) we consider the following minimization problem:
\[\label{mixing-set}
    \min\set{w_0s+\textstyle\sum_{i\leq n} w_ix_i | s+a_ix_i \geq b_i\;\forall i \in [n],\;  s\in\Z_{\geq 0},\; x\in\Z^n}
    \tag{\textsc{Mix}}
\]
% \begin{align*}\label{mixing-set}
%     \min\,\{\,w_0s+\textstyle\sum_{i\leq n} w_ix_i\;|\;
%     & s+a_ix_i \geq b_i\;\forall i \in [n],\\
%     & s\in\Z_{\geq 0},\; x\in\Z^n\,\}
%     \tag{\textsc{Mix}}
% \end{align*}
Formally, this is not a \textsc{Mixing Set} problem, since $s\in\Z_{\geq 0}$ is required to be an integer instead of a real number. However, due to the integrality of all other numbers the relaxation to $s\in\R_{\geq 0}$ admits just the same set of \emph{optimal} solutions. From \cite{DBLP:journals/orl/ConfortiZ09} (or \cite{DBLP:journals/mp/ZhaoF08,DBLP:conf/ipco/ConfortiSW08}) we can derive the following theorem:
\begin{theorem}[\cite{DBLP:journals/orl/ConfortiZ09}]\label{harmonic-mixing-set}
    For harmonic capacities, \eqref{mixing-set} can be computed in time $\Oh(n^2)$.
\end{theorem}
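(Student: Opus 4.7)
The plan is to verify that the cited works directly yield an $\Oh(n^2)$ algorithm for our formulation \eqref{mixing-set}. Two formal discrepancies need to be reconciled. First, the cited works typically state the problem with $s \in \R_{\geq 0}$, whereas \eqref{mixing-set} restricts $s \in \Z_{\geq 0}$; as noted in the paragraph just above the theorem, the integrality of $a$ and $b$ implies that any optimal real $s$ is integer, so the two formulations have identical optima. Second, one must confirm that the cited algorithms accommodate the general weighted objective $w_0 s + \sum_i w_i x_i$ with $w \in \Z_{\geq 0}^{n+1}$; this is covered in particular by the Conforti-Zambelli approach in \cite{DBLP:journals/orl/ConfortiZ09}.

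Given these checks, the algorithmic skeleton I would describe is the following. Step one: observe that for any fixed value of $s$, each $x_i$ is forced to $x_i = \lceil (b_i - s)/a_i \rceil$, since this is the smallest integer satisfying $s + a_i x_i \geq b_i$ and the coefficient $w_i \geq 0$. Hence the problem reduces to optimizing over $s$ alone. Step two: exploit the harmonic structure $a_1 \mid a_2 \mid \cdots \mid a_n$ to restrict $s$ to an $\Oh(n)$-sized candidate set of breakpoints at which the piecewise-constant functions $s \mapsto \lceil (b_i - s)/a_i \rceil$ jump within the relevant range of $s$. Step three: for each candidate $s$, evaluate the objective in $\Oh(n)$ time and return the best, yielding $\Oh(n^2)$ overall.

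The main obstacle in this derivation is step two, i.e.\ bounding the number of relevant candidate values of $s$ to $\Oh(n)$ by using harmonicity. Without harmonicity the number of breakpoints within the relevant range can blow up (and indeed the problem becomes $\NP$-hard, as shown by Eisenbrand and Rothvo\ss{} in \cite{DBLP:conf/approx/EisenbrandR09}). The cited works do this combinatorial analysis carefully—either via an LP-relaxation argument (Zhao-de Farias) or via a direct convex-hull / reformulation argument (Conforti-Sommer-Wolsey, Conforti-Zambelli)—and this is the technical content that I would import rather than re-derive.
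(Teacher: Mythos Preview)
Your proposal is correct and matches the paper's treatment: the paper does not prove this theorem but simply imports it from the cited works, having already noted (just before the theorem) that the integer-$s$ formulation coincides with the standard real-$s$ one under integral data. Your additional sketch of the $\Oh(n)$-candidate-$s$ argument is accurate and is precisely the content of \cite{DBLP:journals/orl/ConfortiZ09}, so nothing further is needed.
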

Given a value for $s$ it is obvious that $x_i = \ceil{(b_i-s)/a_i}$ for all $i\leq n$ leads to an optimal completion of $s$ to a solution $(s,x)$ to \eqref{mixing-set}. The solution $(s,x)$ might not be optimal overall but the choice of $x$ is optimal with respect to $s$. Therefore, given a solution $(s,x)$ we will assume that $x_i = \ceil{(b_i-s)/a_i}$. Furthermore, we will write $x_i(s) = \ceil{(b_i-s)/a_i}$ (and $x(s) = (x_1(s),\dots,x_n(s))$) to denote the dependency of variable $x_i$ on $s$. Remark that by adding/subtracting the least common multiple $m = \lcm_{j\leq n} a_j$ to/from $s$ the value $x_i(s)$ is shifted by $m/a_i$ in the sense that
\[
    x_i(s\pm m)
    = \ceil[\Big]{\frac{b_i-s\mp m}{a_i}}
    = \ceil[\Big]{\frac{b_i-s}{a_i}} \mp \frac{m}{a_i}
    = x_i(s) \mp \frac{m}{a_i}
\]
for all $i\leq n$.
We will depend on an upper bound on the value of $s$ in optimal solutions to \eqref{mixing-set}. Therefore, the following observations will be important.
\begin{observation}\label{mixing-set-unbounded}
    If $\sum_{i\leq n} \frac{w_i}{a_i} > w_0$, then \eqref{mixing-set} is unbounded.
\end{observation}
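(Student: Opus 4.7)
The plan is to exploit the shift identity $x_i(s \pm m) = x_i(s) \mp m/a_i$ derived immediately before the observation, where $m = \lcm_{j \leq n} a_j$. I would fix any feasible starting solution $(s_0, x(s_0))$; for instance, $s_0 = \max_i b_i$ together with $x_i(s_0) = \ceil{(b_i - s_0)/a_i}$ works. Then for every $k \in \Z_{\geq 0}$ the pair $(s_0 + km,\; x(s_0 + km))$ is again feasible, because the optimal completion $x_i(\cdot)$ satisfies every constraint $s + a_i x_i \geq b_i$ by construction, and $s_0 + km \in \Z_{\geq 0}$.

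Next I would track the objective along this infinite ray of feasible solutions. Substituting $x_i(s_0+km) = x_i(s_0) - km/a_i$ into the objective yields
\[
    w_0(s_0 + km) + \sum_{i \leq n} w_i\bigl(x_i(s_0) - km/a_i\bigr)
    = \Bigl(w_0 s_0 + \sum_{i \leq n} w_i x_i(s_0)\Bigr) + k m \Bigl(w_0 - \sum_{i \leq n} \frac{w_i}{a_i}\Bigr).
\]
Under the hypothesis $\sum_{i \leq n} w_i/a_i > w_0$ the coefficient of $k$ is strictly negative, so letting $k \to \infty$ drives the objective to $-\infty$, witnessing unboundedness of \eqref{mixing-set}.

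There is essentially no obstacle; the observation is really just a corollary of the shift identity. The only points that need a brief mention are that $m/a_i \in \Z$ for every $i$ (so the shifted $x$-values stay integral), that $s_0 + km \geq 0$ is preserved since $s_0, m \geq 0$, and that feasibility is retained because $x(\cdot)$ is defined as the optimal completion for the new value of $s$. Conceptually, the shift direction $(s,x) \mapsto (s+m, x - m/a)$ is a lattice direction in the feasible region whose objective change has sign $w_0 - \sum_i w_i/a_i$, and unboundedness follows whenever this sign is negative.
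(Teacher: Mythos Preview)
Your proof is correct and follows essentially the same approach as the paper: both exploit the shift identity $x_i(s+m) = x_i(s) - m/a_i$ with $m = \lcm_j a_j$ to show that moving $s$ by $m$ changes the objective by $m(w_0 - \sum_i w_i/a_i)$, which is strictly negative under the hypothesis. The only cosmetic difference is that the paper argues a single strict improvement step (which can be iterated), while you explicitly parametrize over $k$ and let $k \to \infty$; one tiny caveat is that your illustrative choice $s_0 = \max_i b_i$ need not be nonnegative since $b \in \Z^n$, but any $s_0 \in \Z_{\geq 0}$ works just as well.
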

\begin{proof}
    Let $m=\lcm_{i\leq n} a_i$. For any feasible solution $(s,x)$ we get another feasible solution by adding a multiple of $m$ to $s$. In particular it holds that
    \begin{align*}
        w_0(s+m) + \sum_{i\leq n} w_ix_i(s+m)
        &\;=\; w_0(s+m) + \sum_{i\leq n} w_ix_i(s) - \sum_{i\leq n} w_i\frac{m}{a_i}\\
        &\;=\; w_0 s + \sum_{i\leq n}w_ix_i(s) + m\Big(w_0 - \sum_{i\leq n} \frac{w_i}{a_i}\Big)
    \end{align*}
    and that means that solution $(s',x(s'))$ with $s' = s+m > s$ will be a truly better solution than $(s,x)$ if $\sum_{i\leq n} \frac{w_i}{a_i} > w_0$.
\end{proof}
Hence, we assume that $\sum_{i\leq n}\frac{w_i}{a_i} \leq w_0$ in general.
\begin{observation}\label{mixing-set-small-solution}
    There is an optimal solution to \eqref{mixing-set} with $s < \lcm_{i\leq n} a_i$.
\end{observation}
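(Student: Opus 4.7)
The plan is to show that any optimal solution whose $s$-coordinate is at least $m := \lcm_{i\leq n} a_i$ can be transformed into another optimal solution with smaller $s$, by subtracting $m$ from $s$ and compensating on the $x$-variables. Iterating this shift eventually produces an optimal solution satisfying $s < m$.

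Concretely, I would start from an optimal solution $(s,x)$ of \eqref{mixing-set} (which exists since we have already ruled out the unbounded case via \cref{mixing-set-unbounded} and therefore may assume $\sum_{i\leq n} w_i/a_i \leq w_0$). Assuming $s \geq m$, I would pass to the candidate $(s', x(s'))$ with $s' = s - m \geq 0$. Feasibility is immediate from the identity $s' + a_i x_i(s') = s + a_i x_i(s) \geq b_i$ together with $s' \geq 0$ and $x_i(s') = x_i(s) + m/a_i \in \Z$ (the latter using that $m/a_i$ is an integer for every $i$, which is exactly the harmonic-style property that makes the shift argument work).

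For the objective, I would reuse the displayed computation preceding \cref{mixing-set-unbounded}, but with $-m$ in place of $+m$, to obtain
\[
    w_0 s' + \sum_{i\leq n} w_i x_i(s') \;=\; w_0 s + \sum_{i\leq n} w_i x_i(s) - m\Bigl(w_0 - \sum_{i\leq n} \frac{w_i}{a_i}\Bigr),
\]
and since the bracketed quantity is non-negative under our standing assumption, the new objective value is at most the old one. Hence $(s',x(s'))$ is again optimal. Iterating at most $\lfloor s/m \rfloor$ times yields an optimal solution with $s < m$.

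I do not expect any real obstacle here: everything reduces to the shift identity for $x_i(s)$ already derived in the text and to the sign of $w_0 - \sum_i w_i/a_i$. The one point I would be careful about is ensuring that the transformed solution really stays in the feasible region when $s$ is reduced (non-negativity of $s-m$ and integrality of the shift $m/a_i$), which is precisely what the choice $m = \lcm_{i\leq n} a_i$ guarantees.
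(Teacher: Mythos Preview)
Your proposal is correct and follows essentially the same argument as the paper: shift an optimal solution with $s \geq m$ down by $m = \lcm_{i\leq n} a_i$, use the identity $x_i(s-m) = x_i(s) + m/a_i$ together with $w_0 - \sum_i w_i/a_i \geq 0$ to see the objective does not increase, and iterate. Your extra care about feasibility (non-negativity of $s-m$ and integrality of the shift) is fine and only makes explicit what the paper leaves implicit.
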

\begin{proof}
    Let $m=\lcm_{i\leq n} a_i$ and let $(s,x)$ be an optimal solution with $s\geq m$. We find
    \[
        w_0(s-m) + \!\sum_{i\leq n}w_ix_i(s-m)
        = w_0 s + \!\sum_{i\leq n}w_ix_i(s)-m\Big(\!\underbrace{w_0-\!\!\sum_{i\leq n}\frac{w_i}{a_i}}_{\geq\,0}\!\Big)
        \leq w_0 s + \!\sum_{i\leq n}w_ix_i(s).
    \]
    Hence, $(s' = s -m,x'=x(s'))$ is an optimal solution too. Iterate this argument to find an optimal solution with $s$ smaller than $m$.
\end{proof}
By using an analog contradiction proof it is easy to see that \emph{any} optimal solution will hold $s<\lcm_{i\leq n} a_i$ if $\sum_{i\leq n}\frac{w_i}{a_i}$ is truly smaller than $w_0$. We will use this strengthening:
\begin{observation}\label{mixing-set-small-utilization}
    If $\sum_{i\leq n}\frac{w_i}{a_i} < w_0$, then any optimal solution to \eqref{mixing-set} holds $s < \lcm_{i\leq n} a_i$.
\end{observation}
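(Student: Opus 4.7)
The plan is to mimic the argument of Observation~\ref{mixing-set-small-solution} but sharpen it to a strict inequality. Assume for contradiction that there is an optimal solution $(s,x)$ with $s \geq m := \lcm_{i\leq n} a_i$, and consider the candidate solution $(s', x(s'))$ with $s' = s - m$. Because $s' \geq 0$ and $x_i(s') = x_i(s) + m/a_i$ is still integer-valued and satisfies $s' + a_i x_i(s') \geq b_i$, this candidate is feasible.

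Next I would compute the objective exactly as in the proof of Observation~\ref{mixing-set-small-solution}, obtaining
\[
w_0(s-m) + \sum_{i\leq n} w_i x_i(s-m) \;=\; w_0 s + \sum_{i\leq n} w_i x_i(s) - m\Big(w_0 - \sum_{i\leq n}\frac{w_i}{a_i}\Big).
\]
Under the hypothesis $\sum_{i\leq n}\frac{w_i}{a_i} < w_0$ the parenthesized term is strictly positive, and since $m \geq 1$ the correction $-m(w_0-\sum_i w_i/a_i)$ is strictly negative. Therefore $(s',x(s'))$ has strictly smaller objective value than $(s,x)$, contradicting the optimality of $(s,x)$.

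This shows that no optimal solution can have $s \geq m$, i.e., every optimal solution satisfies $s < \lcm_{i\leq n} a_i$, which is exactly the claim. There is no real obstacle here; the only thing to be careful about is that the displayed identity is the same algebraic manipulation already used in Observation~\ref{mixing-set-small-solution}, so the strict inequality follows immediately from promoting the weak inequality $\sum_i w_i/a_i \leq w_0$ to the strict one assumed here.
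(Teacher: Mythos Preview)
Your proof is correct and is exactly the ``analog contradiction proof'' the paper alludes to: you reuse the identity from Observation~\ref{mixing-set-small-solution} and observe that the strict hypothesis $\sum_i w_i/a_i < w_0$ makes the correction term strictly negative, yielding the desired contradiction. This matches the paper's intended argument.
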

In \cref{lcm-bound-for-S-is-tight} we show that this bound on $s$ is tight in general.

\subsection{Response Time Bounds}
\label{sec:response-time-bounds}

First of all, we refine the utilization bound $\sum_{i<n}c_i/p_i < 1$. In general it holds that
\[\num\label{general-utilization-bound}
    \sum_{i<n}\frac{c_i}{p_i} \leq 1-\frac{1}{\lcm_{i<n}p_i}.
\]
To see this, let $m = \lcm_{j<n}p_j$. From $\sum_{i<n}c_i/p_i < 1$ it follows that $m\cdot\sum_{i<n}c_i/p_i < m$ and both sides of this inequality are integers which implies that $m\cdot\sum_{i<n}c_i/p_i \leq m-1$ and division by $m$ yields the claim.

For the sake of completeness we give some simple bounds to worst-case response times for both the presence and absence of task release jitters. We start with the former case. We also refer to the recent results \cite{DBLP:conf/rtss/BiniPD15,DBLP:conf/rtss/ChenHL16,DBLP:journals/rts/GrassN18,DBLP:conf/ecrts/BoyerRDP21} which give better/comparable bounds. 

\begin{lemma}\label{wcrt-bounds}
    It holds that $\ell \leq r_n \leq u = \min\set{u_1,u_2}$ where
    \begin{align*}
        &\ell = \frac{c_n+\sum_{i<n}\frac{\jitter_i}{p_i}c_i}{1-\sum_{i<n}\frac{c_i}{p_i}},
        %\quad
        \;
        u_1=\ell + \frac{\sum_{i<n}c_i}{1-\sum_{i<n}\frac{c_i}{p_i}},
        \;
        u_2 = \ceil*{\frac{\sum_{i\leq n}c_i}{(1-\sum_{i<n}\frac{c_i}{p_i})\lcm_{i<n}p_i}}\lcm_{i<n}p_i.
    \end{align*}
\end{lemma}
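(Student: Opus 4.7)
The plan is to leverage the defining fixed-point inequality $r_n \geq \Phi(r_n)$, where $\Phi(t) = c_n + \sum_{i<n} c_i\ceil{(t+\jitter_i)/p_i}$, together with the elementary sandwich $x \leq \ceil{x} < x+1$ (valid for every real $x$) and the facts that $\Phi$ is non-decreasing and integer-valued. The general utilization bound~\eqref{general-utilization-bound} guarantees $1-\sum_{i<n}c_i/p_i > 0$, so I may freely divide by this quantity in all the rearrangements below.

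For the lower bound, I would substitute $\ceil{(r_n+\jitter_i)/p_i} \geq (r_n+\jitter_i)/p_i$ into $r_n \geq \Phi(r_n)$, collect the terms linear in $r_n$ on the left, and divide by $1-\sum_{i<n} c_i/p_i$; the result is precisely $r_n \geq \ell$. For the bound $r_n \leq u_1$, observe that by construction $u_1\bigl(1-\sum_{i<n} c_i/p_i\bigr) = c_n + \sum_{i<n} c_i + \sum_{i<n} c_i\jitter_i/p_i$. Applying the strict estimate $\ceil{x} < x+1$ termwise inside $\Phi(u_1)$ and simplifying with this identity yields $\Phi(u_1) < u_1$; since $\Phi(u_1)$ is an integer, this forces $\Phi(u_1) \leq \floor{u_1}$, and monotonicity of $\Phi$ then gives $\Phi(\floor{u_1}) \leq \Phi(u_1) \leq \floor{u_1}$. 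Hence the integer $\floor{u_1}$ satisfies $\Phi(t) \leq t$ and witnesses the minimum defining $r_n$, so $r_n \leq \floor{u_1} \leq u_1$.

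For the bound $r_n \leq u_2$, set $m = \lcm_{i<n} p_i$; by construction $u_2$ is a multiple of $m$ with $u_2 \geq \sum_{i\leq n} c_i/(1-\sum_{i<n} c_i/p_i)$. Divisibility $p_i \mid u_2$ makes $u_2/p_i$ integral, and the period-constraint $\jitter_i \leq p_i$ then gives $\ceil{(u_2+\jitter_i)/p_i} \leq u_2/p_i + 1$; substituting yields $\Phi(u_2) \leq \sum_{i\leq n} c_i + u_2\sum_{i<n} c_i/p_i \leq u_2$, so $r_n \leq u_2$. The main subtlety lies in the $u_1$ argument: the naive $\ceil{x} \leq x+1$ would only deliver $\Phi(u_1) \leq u_1$, which is too weak when $u_1$ is non-integral; the strict form paired with the integer-valuedness of $\Phi$ is precisely what closes the gap between $\ceil{u_1}$ and $u_1$. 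Finiteness of $r_n$ (non-emptiness of the defining set) then follows a posteriori from the two explicit integer witnesses $\floor{u_1}$ and $u_2$ constructed above.
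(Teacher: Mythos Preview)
Your proof is correct, and for $\ell$ and $u_2$ it coincides with the paper's argument. For $u_1$ there is a small but genuine difference worth noting: the paper works from the fixed-point \emph{equality} $r_n=\Phi(r_n)$ (which holds, though it is not justified explicitly there), applies $\ceil{x}\le x+1$ inside $\Phi(r_n)$, and rearranges to obtain $r_n\le u_1$ directly. You instead exhibit $\floor{u_1}$ as an explicit feasible integer for the defining minimum, using the strict estimate $\ceil{x}<x+1$ together with the integrality and monotonicity of $\Phi$. Your route has the advantage that it does not presuppose $r_n<\infty$ and in fact establishes finiteness as a byproduct; the paper's route is marginally shorter once the fixed-point equality is taken for granted.
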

\begin{proof}
    By dropping the roundings it is easy to see that
    \[
        r_n
        \;=\; c_n + \sum_{i<n} c_i\ceil*{\frac{r_n+\jitter_i}{p_i}}
        \;\geq\; c_n + r_n\sum_{i<n} \frac{c_i}{p_i} + \sum_{i<n} \frac{\jitter_i}{p_i}c_i
    \]
    which directly implies the lower bound $\ell$.
    To prove the former upper bound $u_1$ we see that
    \[
        r_n
        \;=\; c_n + \sum_{i<n}\ceil[\Big]{\frac{r_n+\jitter_i}{p_i}}c_i
        \;\leq\; c_n + \sum_{i<n}\Big(\frac{r_n+\jitter_i}{p_i}+1\Big)c_i
    \]
    and rearrangement yields
    \[
        r_n
        \;\leq\; \ell + \frac{\sum_{i<n}c_i}{1-\sum_{i<n}\frac{c_i}{p_i}}
        \;=\; u_1.
    \]
    To confirm the latter upper bound $u_2$ we can argue that
    $u_2 \geq (\sum_{i\leq n}c_i)/(1-\sum_{i<n}c_i/p_i)$ which implies that $\sum_{i\leq n}c_i \leq u_2(1-\sum_{i<n}c_i/p_i)$ and we observe
    \begin{align*}
        c_n+\sum_{i<n}\ceil[\Big]{\frac{u_2+\jitter_i}{p_i}}c_i\;
        &=\; c_n+\sum_{i<n}\Big(\frac{u_2}{p_i}+\ceil[\Big]{\frac{\jitter_i}{p_i}}\Big)c_i\reason{$\lcm$ property}\\
        &\leq\; c_n+\sum_{i<n}\Big(\frac{u_2}{p_i}+1\Big)c_i\reason{$0 \leq \jitter_i \leq p_i$}\\
        &=\; u_2\sum_{i<n}\frac{c_i}{p_i}+\sum_{i\leq n} c_i\\
        %&\leq\; u_2\left(1-\frac{c_n}{p_n}\right)+\sum_{i\leq n} c_i\reason{utilization}\\
        &\leq\; u_2\sum_{i<n}\frac{c_i}{p_i}+u_2\Big(1-\sum_{i<n}\frac{c_i}{p_i}\Big)
        \;=\; u_2
    \end{align*}
    which proves the claim. \todo{we never use $u_2$ ... drop it?}
\end{proof}

In the following we will bound the size of the interval $[\ell,u]$.

\begin{lemma}\label{pseudo-polynomial-interval-length}
    By using only the general utilization bound it follows that $u-\ell \leq p_{\max}^n$.
    With the schedulability utilization bound it follows that $u-\ell \leq p_{\max}^2$ and even $u \leq 2p_{\max}^2$.
\end{lemma}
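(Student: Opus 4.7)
The plan is to bound $u_1 - \ell$ directly, since $u = \min(u_1,u_2) \leq u_1$. Subtracting the two formulas from \cref{wcrt-bounds}, the $c_n$ term and the jitter terms cancel and we obtain the clean expression
\[
    u_1 - \ell = \frac{\sum_{i<n} c_i}{1 - \sum_{i<n} c_i/p_i},
\]
so the task reduces to bounding this ratio (for both length claims) and then separately bounding $\ell$ itself (for the claim on $u$).

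The key estimate for the numerator is $\sum_{i<n} c_i \leq p_{\max}$, which I would get from the repackaging $\sum_{i<n} c_i = \sum_{i<n}(c_i/p_i)\,p_i \leq p_{\max} \sum_{i<n} c_i/p_i$ together with the general utilization bound $\sum_{i<n}c_i/p_i < 1$. I expect this to be the main step to watch: using the trivial bound $\sum_{i<n}c_i \leq (n-1)p_{\max}$ would spoil the final exponent. For the denominator I would then branch on the available hypothesis. Under the refined general bound~\eqref{general-utilization-bound} we have $1 - \sum_{i<n}c_i/p_i \geq 1/\lcm_{i<n}p_i \geq p_{\max}^{-(n-1)}$, giving $u_1 - \ell \leq p_{\max}\cdot p_{\max}^{n-1} = p_{\max}^n$. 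Under the schedulability bound $\sum_{i\leq n}c_i/p_i \leq 1$ we instead get the much stronger $1 - \sum_{i<n}c_i/p_i \geq c_n/p_n \geq 1/p_{\max}$ (using $c_n \geq 1$), which yields $u_1 - \ell \leq p_{\max}\cdot p_{\max} = p_{\max}^2$.

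For the final claim $u \leq 2 p_{\max}^2$, I would bound $\ell$ in essentially the same way. The period-constrained jitter assumption $\jitter_i \leq p_i$ allows us to estimate the numerator of $\ell$ by $c_n + \sum_{i<n}(\jitter_i/p_i)c_i \leq c_n + \sum_{i<n}c_i = \sum_{i\leq n} c_i \leq p_{\max}$, while under the schedulability bound the denominator is still at least $1/p_{\max}$. Thus $\ell \leq p_{\max}^2$, and combining with the difference bound from the previous paragraph gives $u \leq \ell + (u - \ell) \leq p_{\max}^2 + p_{\max}^2 = 2p_{\max}^2$, as desired.
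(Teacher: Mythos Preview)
Your proof is correct and follows essentially the same approach as the paper: both use the identity $u_1-\ell=\bigl(\sum_{i<n}c_i\bigr)/\bigl(1-\sum_{i<n}c_i/p_i\bigr)$, bound the numerator via $\sum_{i<n}c_i\leq p_{\max}\sum_{i<n}c_i/p_i<p_{\max}$, and bound the denominator by $1/\lcm_{i<n}p_i$ (general case) or $c_n/p_n$ (schedulability case). For the final claim the paper bounds $u_1$ directly as a single fraction whereas you bound $\ell$ and add $u_1-\ell$, but this is merely a cosmetic reorganization of the same estimates.
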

\begin{proof}
    Using the general utilization bound \eqref{general-utilization-bound}, i.e. $\sum_{i<n}c_i/p_i \leq 1-1/\lcm_{i<n}p_i$, it follows that $\sum_{i<n}c_i \leq p_{\max}\sum_{i<n}c_i/p_i < p_{\max}$ and
    \[
        u_1 - \ell
        = \frac{\sum_{i<n}c_i}{1-\sum_{i<n}\frac{c_i}{p_i}}
        \leq (\lcm_{i<n}p_i)\sum_{i<n}c_i
        \leq (\lcm_{i<n}p_i)p_{\max}
        \leq p_{\max}^n.
    \]
    Assume that $\sum_{i\leq n}c_i/p_i \leq 1$ which implies that $1/(1-\sum_{i<n}c_i/p_i) \leq p_n/c_n$. Since $\sum_{i\leq n}c_i \leq p_{\max}\sum_{i\leq n}c_i/p_i \leq p_{\max}$ we have
    \[
        u_1 - \ell
        = \frac{\sum_{i<n}c_i}{1-\sum_{i<n}\frac{c_i}{p_i}}
        \leq \frac{p_n}{c_n}\sum_{i<n}c_i
        \leq \frac{p_n}{c_n}p_{\max}
        \leq p_{\max}^2
    \]
    and by also using the definition of $\ell$ it follows that
    \[
        u_1
        = \frac{\sum_{i\leq n}c_i+\sum_{i<n}\frac{\jitter_i}{p_i}c_i}{1-\sum_{i<n}\frac{c_i}{p_i}}
        \leq \frac{p_n}{c_n}\Big(p_{\max}+\sum_{i<n}c_i\Big)
        \leq \frac{p_n}{c_n}\cdot 2p_{\max}
        \leq 2p_{\max}^2
    \]
    and that proves the claim.
\end{proof}

In the case of jitter-free systems, i.e. $\jitter = 0$, the bounds simplify a lot.
Then the tight upper bound is $P=\lcm_{i\leq n}p_i$; in fact, the upper bound of $c_n \cdot P$, which was given by Bonifaci et al. \cite[Lemma 1]{DBLP:conf/rtss/BonifaciMMW13}, can be improved to $P$. The proof is short:

\begin{lemma}\label{wcrt-bounds-without-jitters}
    If $\jitter=0$, then $c_n/(1-\sum_{i<n}\frac{c_i}{p_i}) \leq r_n \leq P$.
\end{lemma}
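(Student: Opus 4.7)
The lower bound is immediate from \cref{wcrt-bounds}: substituting $\jitter_i=0$ for all $i<n$ into the expression for $\ell$ yields exactly $\ell = c_n/(1-\sum_{i<n}c_i/p_i)$, and $r_n\geq\ell$ is already proved there.

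For the upper bound $r_n\leq P$, the plan is simply to verify that $t:=P$ satisfies the defining inequality \eqref{wcrt-with-jitters-original} of $r_n$, so that the minimum over feasible $t$ is at most $P$. Since $p_i\mid P$ for every $i\leq n$, each ceiling collapses, giving
\[
    c_n + \sum_{i<n} c_i\ceil[\Big]{\frac{P+0}{p_i}}
    = c_n + P\sum_{i<n}\frac{c_i}{p_i}.
\]
Hence $P$ is feasible precisely when $c_n \leq P\bigl(1-\sum_{i<n}\frac{c_i}{p_i}\bigr)$.

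To establish this inequality I will invoke the schedulability utilization bound $\sum_{i\leq n}c_i/p_i\leq 1$, rearranged as $c_n/p_n\leq 1-\sum_{i<n}c_i/p_i$, i.e.\ $c_n\leq p_n\bigl(1-\sum_{i<n}c_i/p_i\bigr)$. Because $p_n\mid P$ we have $p_n\leq P$, and multiplying through yields $c_n\leq P\bigl(1-\sum_{i<n}c_i/p_i\bigr)$, as required.

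There is no real obstacle here; the only subtlety is noting that, unlike the rest of \cref{sec:response-time-bounds} which relies solely on the general utilization bound $\sum_{i<n}c_i/p_i<1$, the bound $r_n\leq P$ needs the stronger schedulability bound $\sum_{i\leq n}c_i/p_i\leq 1$ to absorb $c_n$ into the $p_n$-factor. Everything else reduces to the divisibility $p_i\mid P$ that makes the ceilings exact.
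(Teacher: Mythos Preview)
Your proof is correct and follows essentially the same approach as the paper: collapse the ceilings using $p_i\mid P$, then combine $p_n\leq P$ with the schedulability bound $\sum_{i\leq n}c_i/p_i\leq 1$ to conclude. The paper merely packages the last step differently (writing $c_n=p_n\cdot\tfrac{c_n}{p_n}\leq P\cdot\tfrac{c_n}{p_n}$ and absorbing it into $P\sum_{i\leq n}c_i/p_i\leq P$), and your side remark that the rest of \cref{sec:response-time-bounds} relies solely on the general utilization bound is not quite right (see \cref{pseudo-polynomial-interval-length} and \cref{wcrt-upper-bound-is-tight}), but this does not affect the argument.
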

\begin{proof}
    The lower bound is implied by \cref{wcrt-bounds}. For the upper bound see that
    \[
        c_n+\sum_{i<n}\ceil*{\frac{P}{p_i}}c_i
        = c_n + P\sum_{i<n}\frac{c_i}{p_i}
        = p_n\frac{c_n}{p_n}+P\sum_{i<n}\frac{c_i}{p_i}
        \leq P\sum_{i=1}^{n}\frac{c_i}{p_i}
        \leq P
    \]
    which proves the claim.
\end{proof}
% \begin{observation}\label{bonifaci-et-al-upper-bound}
%     The upper bound of Bonifaci et al. \cite[Lemma 1]{DBLP:conf/rtss/BonifaciMMW13} can be improved to $P$.
% \end{observation}
% \begin{proof}
%     Here we use the notation of Bonifaci et al. \cite{DBLP:conf/rtss/BonifaciMMW13}.
%     It holds that
%     \begin{align*}
%         c_n+\sum_{i<n}\ceil*{\frac{P}{p_i}}c_i
%         = c_n + P\cdot\sum_{i<n}\frac{c_i}{p_i}
%         &= p_n\cdot\frac{c_n}{p_n}+P\cdot\sum_{i<n}\frac{c_i}{p_i}\\
%         &\leq P\cdot\frac{c_n}{p_n} + P\cdot\sum_{i<n}\frac{c_i}{p_i}
%         = P\cdot\sum_{i\leq n}\frac{c_i}{p_i}
%         \leq P.\qedhere
%     \end{align*}
% \end{proof}

\section{Response Time Computation}\label{response-time-computation}

We introduce generalized problem formulations as follows. For an integer $\gamma\geq 1$ we aim to compute the general response time
\[\num\label{wcrt-with-jitters}
    \response(\gamma) = \min\set{t|t \geq \gamma + \textstyle\sum_{i<n}\displaystyle c_i\ceil[\Big]{\frac{t+\jitter_i}{p_i}},\; t \in \Z_{\geq 0}}
\]
or the response time for the simplified instance without release jitters
\[\num\label{wcrt-without-jitters}
    \response_0(\gamma) = \min\set{t|t \geq \gamma + \textstyle\sum_{i<n}\displaystyle c_i\ceil[\Big]{\frac{t}{p_i}},\; t \in \Z_{\geq 0}}.
\]
Obviously, $r_n = \response(c_n)$ is the worst-case response time of task $\tau_n$ and $r_n' = \response_0(c_n)$ is the worst-case response time for the simplified instance without release jitters.

\subsection{A Conditional Karp Reduction}
\label{conditional-karp-reduction}

In the following we apply the dualization technique of \cref{dualision} to RTC.
Apparently, \eqref{wcrt-with-jitters} may also be stated using the following formulation as an integer linear program:
\[\label{wcrt-decision}
    \response(\gamma) = \min\set{t|t \geq \gamma + \textstyle\sum_{i<n} c_ix_i,\; p_ix_i \geq t+\jitter_i \;\forall i<n,\; t \in \Z_{\geq 0},\; x \in \Z^{n-1}}
    \tag{$\response$}
\]
% \begin{align*}\label{wcrt-decision}
%     \response(\gamma) = \min\,\{\,t \;|\;
%     & t \geq \gamma + \textstyle\sum_{i<n} c_ix_i,\\
%     & p_ix_i \geq t+\jitter_i \;\forall i<n,\\
%     & t \in \Z_{\geq 0},\; x \in \Z^{n-1}\,\}
%     \tag{$\response$}
% \end{align*}
Let $u$ be the upper bound of \cref{wcrt-bounds} and let $k \in [u]$. One can readily verify that the associated decision problem $\response(\gamma) \leq_? k$ may be decided by answering $\responsedual(k) \geq_? \gamma$ where
\[\label{wcrt-dual}
    \responsedual(k) = \max\set{t-\textstyle\sum_{i<n}c_ix_i|t \leq k,\; p_ix_i \geq t+\jitter_i\;\forall i<n,\; t\in\Z_{\geq 0},\; x\in\Z^{n-1}}.
    \tag{$\responsedual$}
\]
% \begin{align*}\label{wcrt-dual}
%     \responsedual(k) = \max\,\{\,t-\textstyle\sum_{i<n}c_ix_i\;|\;
%     & t \leq k,\\
%     & p_ix_i \geq t+\jitter_i\;\forall i<n,\\
%     & t\in\Z_{\geq 0},\; x\in\Z^{n-1}\,\}.
%     \tag{$\responsedual$}
% \end{align*}
Let $S$ be the smallest upper bound on the value of $s$ in any optimal solution to the following instance of \textsc{Mixing Set}:
\[\label{wcrt-mixing-set}
    \mixing(k) = \min\set{s+\textstyle\sum_{i<n}c_ix_i | s+p_ix_i \geq k+\jitter_i\;\forall i<n,\;  s\in\Z_{\geq 0},\; x\in\Z^{n-1}}
    \tag{$\mixing$}
\]
We will prove that, if $k \geq S$ holds true, then $    \responsedual(k) = k-\mixing(k)$.
Hence, if $k$ is large enough, i.e. $k\geq S$, we can decide $\response(\gamma) \leq_? k$ by using the following equivalence:
\[\num
    \response(\gamma) \leq k
    \;\;\Leftrightarrow\;\; \responsedual(k) \geq \gamma\notag
    \;\;\Leftrightarrow\;\; k-\mixing(k) \geq \gamma\notag
    \;\;\Leftrightarrow\;\; \mixing(k) \leq k-\gamma \label{response-mixing-equivalence}
\]
Thus, we get the following first result.
\begin{theorem}\label{thm:reduction-to-mixing-set}
    Let $\gamma\in\Z_{\geq 1}$. If $k \geq S$, then there is a reduction from $\response(\gamma) \leq_? k$ to $\mixing(k) \leq_? k-\gamma$ in constant time.%\qed
\end{theorem}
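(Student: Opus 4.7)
The plan is to justify the equivalence chain \eqref{response-mixing-equivalence} displayed just above the statement; the theorem then follows because the two decision questions become algebraically interchangeable. The first equivalence $\response(I,\gamma) \leq k \Leftrightarrow \responsedual(I,k) \geq \gamma$ is a direct instance of the dualization principle from \cref{dualision} applied to the program \eqref{wcrt-decision}: the minimum of $t$ subject to $t \geq \gamma + \sum_{i\in I} c_i x_i$ and the $p_i x_i$-constraints is at most $k$ iff there is a feasible pair $(t,x)$ with $t \leq k$ satisfying $t - \sum_{i\in I} c_i x_i \geq \gamma$.

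The heart of the proof is the identity $\responsedual(I,k) = k - \mixing(I,k)$ under the hypothesis $k \geq S$, which I would establish via the affine substitution $s \leftrightarrow k - t$ applied jointly to objective and constraints. The constraints align cleanly: $s + p_i x_i \geq k + \jitter_i$ rewrites as $p_i x_i \geq t + \jitter_i$ when $s = k - t$, while $s \geq 0$ corresponds to $t \leq k$. The objectives are complementary, since $s + t = k$ gives $(s + \sum_{i\in I} c_i x_i) + (t - \sum_{i\in I} c_i x_i) = k$, so minimizing one amounts to maximizing the other up to the additive constant $k$. The easy direction is unconditional: every feasible $(t,x)$ of \eqref{wcrt-dual} yields $(k-t, x)$ feasible for \eqref{wcrt-mixing-set} of complementary value, hence $\mixing(I,k) \leq k - \responsedual(I,k)$.

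The main obstacle, and the only place where the hypothesis $k \geq S$ enters, is the reverse inequality. Here one must pass from an optimal $\mixing$ solution back to a feasible point of \eqref{wcrt-dual}, which requires the mixing solution to satisfy $s \leq k$ so that $t := k - s$ is non-negative. The definition of $S$ as an upper bound on the $s$-coordinate of optimal solutions to \eqref{wcrt-mixing-set}, combined with $k \geq S$, guarantees exactly this: some optimal $(s^\ast,x^\ast)$ satisfies $s^\ast \leq S \leq k$, and the corresponding $t = k - s^\ast \in \Z_{\geq 0}$ realizes objective value $k - \mixing(I,k)$ in \eqref{wcrt-dual}, completing the identity. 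Chaining the two equivalences with the trivial rearrangement $k - \mixing(I,k) \geq \gamma \Leftrightarrow \mixing(I,k) \leq k - \gamma$ yields \eqref{response-mixing-equivalence}, and since the instance $\mixing(I,k)$ (with right-hand sides $k + \jitter_i$) and the threshold $k - \gamma$ are obtained from $(I,\gamma,k)$ by trivial arithmetic on the existing task data, the reduction is constant time as claimed.
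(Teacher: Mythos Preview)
Your argument is correct and mirrors the paper's own proof (\cref{reduction-to-mixing-set}) essentially step for step: you use the same affine substitution $s = k - t$, verify constraint and objective correspondence the same way, and invoke $k \geq S$ at exactly the same point to guarantee $t = k - s \geq 0$ when passing from an optimal mixing solution back to \eqref{wcrt-dual}. The only cosmetic difference is that you phrase the two directions as inequalities $\mixing(I,k) \leq k - \responsedual(I,k)$ and its reverse, whereas the paper tracks a single value $\sigma$ through both transformations.
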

In the case of harmonic periods we get harmonic capacities in the instances of \textsc{Mixing Set} which implies an efficient decision algorithm. However, without further insights we can decide $\response(\gamma) \leq_? k$ only for large values of $k$.
%In \cref{wcrt-harmonic-periods} we describe how to overcome this issue in the case of harmonic periods.

If $k \geq S$ holds, remark that optimal solutions to \eqref{wcrt-mixing-set} also do hold $x \in \Z_{\geq 0}^{n-1}$, since $s \leq S \leq k$ which implies $x_i = \ceil{(k-s+\jitter_i)/p_i} \geq \ceil{\jitter_i/p_i} \geq 0$ for all $i<n$. By using \cref{mixing-set-small-utilization} it holds that $S \leq \lcm_{i<n}p_i$, since $\sum_{i<n} \frac{c_i}{p_i} \leq \sum_{i<n} \frac{c_i}{p_i} < 1$.
%In \cref{lcm-bound-for-S-is-tight} we prove that this bound on $S$ is tight for general instances of \textsc{Mixing Set}.

\begin{lemma}\label{reduction-to-mixing-set}
    If $k \geq S$, then $\responsedual(k) = k-\mixing(k)$.
\end{lemma}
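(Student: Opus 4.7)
The plan is to establish a bijection between feasible solutions of $\responsedual(I,k)$ and a restricted subset of feasible solutions of $\mixing(I,k)$, via the substitution $s = k - t$, and then use the hypothesis $k \geq S$ to argue that this restriction does not cut off any optimal mixing-set solution.

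First I would set up the substitution. Given a feasible $(t,x)$ for $\responsedual(I,k)$, define $s = k - t$; since $0 \leq t \leq k$, this yields $s \in \Z_{\geq 0}$ with the extra property $s \leq k$. The constraint $p_i x_i \geq t + \jitter_i$ rewrites as $s + p_i x_i \geq k + \jitter_i$, which is exactly the constraint of $\mixing(I,k)$. The objective transforms as
\[
t - \sum_{i\in I} c_i x_i \;=\; k - \Bigl(s + \sum_{i \in I} c_i x_i\Bigr),
\]
so maximizing the left-hand side over the $\responsedual$ feasible set equals $k$ minus the minimum of the right-hand side over all $(s,x)$ satisfying the mixing constraints together with $s \leq k$. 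Running the same substitution in reverse (setting $t = k - s$) shows the correspondence is indeed a bijection between the two feasible sets, so
\[
\responsedual(I,k) \;=\; k - \min\bigl\{\, s + \textstyle\sum_{i\in I} c_i x_i \,\bigm|\, s + p_i x_i \geq k + \jitter_i\,\forall i,\; 0 \leq s \leq k,\; s\in\Z,\; x\in\Z^I\,\bigr\}.
\]

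Next I would remove the artificial upper bound $s \leq k$ from the right-hand side. By definition, $S$ is an upper bound on the value of $s$ in some optimal solution to $\mixing(I,k)$; since $k \geq S$, such an optimal solution already satisfies $s \leq k$, so imposing this constraint does not change the optimum. Therefore the minimum above equals $\mixing(I,k)$, giving $\responsedual(I,k) = k - \mixing(I,k)$.

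The main obstacle is verifying that the bijection really does preserve integrality and nonnegativity in both directions: the forward direction is immediate because $t \geq 0$ forces $s \leq k$, but the reverse direction requires that a mixing-set optimum with $s \leq k$ yields a valid $t = k - s \in \Z_{\geq 0}$, which it does precisely because $k \geq S \geq s$. No condition on $x$ needs checking since both problems range $x$ over $\Z^I$; the remark just before the lemma additionally notes that optimal $x$-values are automatically nonnegative under the hypothesis $k \geq S$, which is consistent with but not needed for this equivalence.
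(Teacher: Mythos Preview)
Your proposal is correct and follows essentially the same approach as the paper: the substitution $s=k-t$ (and $t=k-s$ in reverse) together with the hypothesis $k\geq S$ to guarantee $t\geq 0$ for an optimal mixing solution. The paper presents this as two direct transformations of optimal solutions, while you package it as a bijection to the mixing problem restricted by $s\leq k$ followed by dropping that restriction via $S\leq k$; the content is the same.
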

\begin{proof}
    Let $(t,x)$ be an optimal solution to \eqref{wcrt-dual}
    with objective value $k-\sigma$ for some value $\sigma$. We know that $0 \leq t \leq k$. Then by setting $s = k-t \geq 0$ we see that $(s,x)$ is a solution to \eqref{wcrt-mixing-set} since $s+p_ix_i = k-t+p_ix_i \geq k-t+t+\jitter_i = k+\jitter_i$. Its objective value is
    \[
        s+\sum_{i<n} c_ix_i
        = k - t + \sum_{i<n} c_ix_i
        = k-\Big(\!\underbrace{t-\sum_{i<n} c_ix_i}_{=\,k-\sigma}\!\Big)
        = \sigma.
    \]
    Vice-versa, let $(s,x)$ be an optimal solution to \eqref{wcrt-mixing-set} with objective value $\sigma$. We know that $0 \leq s \leq S \leq k$. Hence, by setting $t = k-s \geq 0$ we see that $(t,x)$ is a solution to \eqref{wcrt-dual} since $t+\jitter_i = k-s+\jitter_i = k+\jitter_i-s \leq s+p_ix_i -s = p_ix_i$ %$t-p_ix_i = u-s-p_ix_i = u-(s+p_ix_i) \leq u-(u+\jitter_i) = -\jitter_i$
    and $t = k-s \leq k$. Its objective value is
    \[
        t-\sum_{i<n} c_ix_i
        = k-s - \sum_{i<n} c_ix_i
        = k-\Big(s+\sum_{i<n} c_ix_i\Big)
        = k - \sigma
    \]
    and that proves the claim.
\end{proof}

The power of \cref{reduction-to-mixing-set} crucially depends on the size of $S$. Fortunately, for the concrete formulation \eqref{wcrt-mixing-set} we can do better than \cref{mixing-set-small-utilization}. In fact, $S$ can be bounded by the worst-case response time of the simplified instance without release jitters, i.e. $\response_0(\gamma)$; the following Lemma will be of great benefit, especially for the case of harmonic periods.

\begin{lemma}\label{bound-S-with-jitter-free-response-time}
    It is true that $S \leq \response_0(\gamma)-\gamma$.
\end{lemma}
\begin{proof}
    Let $S' = \response_0(\gamma)-\gamma+1$, define $f(s) = s+\sum_{i<n}c_ix_i(s)$ as the objective function of $\mixing(k)$, and let $s \geq S'$ be a 
    solution to $\mixing(k)$. By the definition of $\response_0(\gamma)$, the response time for the simplified instance without release jitters \eqref{wcrt-without-jitters}, we have
    \[\num\label{bound-S-with-jitter-free-response-time-helper-1}
        \sum_{i<n}c_i\ceil[\Big]{\frac{\response_0(\gamma)}{p_i}} \leq \response_0(\gamma) - \gamma
    \]
    which we use together with $\gamma \geq 1$ to show that
    \[\num\label{bound-S-with-jitter-free-response-time-helper-2}
        \sum_{i<n}c_i\ceil[\Big]{\frac{S'}{p_i}}
        = \sum_{i<n}c_i\ceil[\Big]{\frac{\response_0(\gamma)-\gamma+1}{p_i}}
        \leq \sum_{i<n}c_i\ceil[\Big]{\frac{\response_0(\gamma)}{p_i}}
        \stackrel{\textcolor{gray}{\eqref{bound-S-with-jitter-free-response-time-helper-1}}}{\leq} \response_0(\gamma)-\gamma
        < S'
    \]
    and finally, we can conclude that
    \begin{align*}
        f(s-S')
        &= s-S' + \sum_{i<n}c_i\ceil[\Big]{\frac{k+\jitter_i-s+S'}{p_i}}\\
        &\leq s - S' + \sum_{i<n}c_i\Big(\ceil[\Big]{\frac{k+\jitter_i-s}{p_i}}+\ceil[\Big]{\frac{S'}{p_i}}\Big)\\
        &= f(s) - S' + \underbrace{\sum_{i<n}c_i\ceil[\Big]{\frac{S'}{p_i}}}_{<\,S'}
        \stackrel{\textcolor{gray}{\eqref{bound-S-with-jitter-free-response-time-helper-2}}}{<} f(s)
    \end{align*}
    which reveals $s' = s-S'$ as a truly better solution to $\mixing(k)$. Thus, any optimal solution has to be truly smaller than $S'$ which proves the claim.
\end{proof}

Before presenting our algorithms we want to mention a last ingredient, which simply says that task release jitters make worst-case response times worse. The argument is simple:
\[
    \gamma + \sum_{i<n}c_i\ceil[\Big]{\frac{\response(\gamma)}{p_i}}
    \leq c_n + \sum_{i<n}c_i\ceil[\Big]{\frac{\response(\gamma)+\jitter_i}{p_i}}
    \leq \response(\gamma)
\]
The former inequality is trivial as we only increase the numerators and the latter is the defining inequality of the worst-case response time $\response(\gamma)$. Together they say that $\response(\gamma)$ is a solution for the jitter-free case as well. Thus, we find that
\[\num\label{jitters-make-it-worse}
    \response_0(\gamma) \leq \response(\gamma).
\]

\subsection{Harmonic Periods}\label{wcrt-harmonic-periods}

Here we assume harmonic periods, i.e. $p_i\geq p_j$ implies $p_i/p_j \in \Z$ for all $i,j \leq n$.
At this point we are already suited with all the necessary tools to state our algorithm for harmonic periods.
Together, they allow for the following simple algorithm\footnote{A previous version of this paper presented a much more involved approach to prove polynomiality for harmonic periods; in fact, it is \cref{bound-S-with-jitter-free-response-time} that allows us to achieve this much simpler \emph{and} faster algorithm}.

\begin{algorithm}[H]
\setstretch{1.3}
\begin{algorithmic}[1]
    \State Compute $\response_0(c_n)$, the response time for the simplified instance without release jitters
    \State $L \gets \response_0(c_n)$
    \State $R \gets u$
    \While{$L \neq R$}
        \State $k \gets \floor{(L+R)/2}$
        \IfThenElse{$\mixing(k) \leq k - c_n$}{$R \gets k$}{$L \gets k+1$}
    \EndWhile
    \State \Return $R$
\end{algorithmic}
\caption{\textsc{Harmony}}
\label{alg:harmony}
\end{algorithm}

Now we are ready to give the proof to \cref{main-result}.

\begin{proof}[Proof of \cref{main-result}]
    The first step of \cref{alg:harmony} is to compute $\response_0(c_n)$ and as the periods are harmonic this can be done in at most $\Oh(n\log(n+p_{\max}))$ operations by applying Bonifaci et al. \cite{DBLP:conf/rtss/BonifaciMMW13} or Nguyen et al. \cite{DBLP:conf/iwoca/NguyenGJ22}. The choice of $\response_0(c_n)$ as the lower bound $L$ is reasonable due to \eqref{jitters-make-it-worse}. In each iteration of the subsequent binary search we know that
    \begin{align*}
        k
        &\geq L\reason{Def. $k$}\\
        &= \response_0(c_n)\reason{Def. $L$}\\
        &\geq \response_0(c_n) - c_n\reason{$c_n\geq 1$}\\
        &\geq S\reason{\cref{bound-S-with-jitter-free-response-time}}
    \end{align*}
    which sets us in the position to apply \cref{thm:reduction-to-mixing-set}, i.e. to use the equivalence \eqref{response-mixing-equivalence} to decide whether $r_n = \response(c_n) \leq k$ by solving an appropriate instance of \textsc{Mixing Set} which we do in line 6 of \cref{alg:harmony}. Each instance can be solved in time $\Oh(n^2)$ by using \cref{harmonic-mixing-set}.
    In total we get a running time of at most
    \begin{align*}
        &\Oh(n\log(n+p_{\max})) + \Oh(n^2\log(u-\response_0(c_n)))\\
        &\leq \Oh(n\log(n+p_{\max})) + \Oh(n^2\log p_{\max})\reason{\cref{pseudo-polynomial-interval-length}}\\
        &\leq \Oh(n^2\log p_{\max})
    \end{align*}
    which ends the proof.
\end{proof}

\subsection{Arbitrary Periods}
\label{wcrt-arbitrary-periods}

The following theorem reveals a first simple algorithm to compute worst-case response times for general periods in exponential time.

\begin{theorem}\label{wcrt-simple-approach}
    The worst-case response time $r_n$ can be computed in time $\Oh(n\cdot\lcm_{i<n}p_i)$.
\end{theorem}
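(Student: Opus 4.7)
The plan is to exploit the periodicity of $\Phi(t) := c_n + \sum_{i<n} c_i \lceil (t+\jitter_i)/p_i\rceil$ modulo $P := \lcm_{i<n} p_i$. Since each quotient $P/p_i$ is an integer, we have $\lceil (t+P+\jitter_i)/p_i\rceil = \lceil(t+\jitter_i)/p_i\rceil + P/p_i$. Hence, writing $U := \sum_{i<n} c_i/p_i$ and $h(t) := \Phi(t)-t$, a direct calculation gives the periodicity identity $h(t+P) = h(t) - D$, where $D := P(1-U) = P - \sum_{i<n} c_i\cdot P/p_i$ is an integer. The general utilization bound~\eqref{general-utilization-bound} yields $U \leq 1 - 1/P$, so $D \geq 1$.

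By monotonicity of $\Phi$, the value $r_n$ coincides with the smallest $t \in \Z_{\geq 0}$ satisfying $h(t) \leq 0$. Decomposing any candidate as $t = \rho + kP$ with $\rho \in \{0,\dots,P-1\}$ and $k \in \Z_{\geq 0}$, iterating the periodicity identity produces $h(\rho+kP) = h(\rho) - kD$, so the smallest admissible $k$ for residue $\rho$ is $k_\rho := \max(0,\lceil h(\rho)/D\rceil)$. Therefore
\[
    r_n = \min_{\rho \in \{0,\dots,P-1\}} \bigl(\rho + k_\rho \cdot P\bigr).
\]
The algorithm computes $D$ once in $\Oh(n)$ time, evaluates $h(\rho) = \Phi(\rho)-\rho$ for each of the $P$ residues in $\Oh(n)$ per residue, and returns the minimum above in $\Oh(P)$ additional time. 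In total the running time is $\Oh(n \cdot P) = \Oh(n\cdot \lcm_{i<n} p_i)$.

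The only non-routine ingredient is the positivity of $D$, which is exactly the content of the general utilization bound~\eqref{general-utilization-bound}. Without the periodicity collapse, a naive fixed-point iteration would not suffice, since by~\cref{wcrt-bounds} (and tightness in~\cref{wcrt-upper-bound-is-tight}) $r_n$ itself may grow as large as $\Omega(P^2)$; the periodicity argument restricts the genuinely distinct evaluations of $h$ to the single window $[0,P)$ and reduces the optimization over all of $\Z_{\geq 0}$ to a closed-form minimum.
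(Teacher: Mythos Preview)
Your proof is correct and follows essentially the same approach as the paper: both exploit the periodicity identity $h(\rho+kP)=h(\rho)-kD$ with $D=P(1-U)\in\Z_{\geq 1}$ and then scan all residues $\rho\in\{0,\dots,P-1\}$. The only cosmetic difference is that the paper solves for the exact fixed point $\lambda=h(\rho)/D$ and keeps only those $\rho$ for which $\lambda$ is a non-negative integer, whereas you round up via $k_\rho=\max(0,\lceil h(\rho)/D\rceil)$ to obtain the first inequality point in each residue class; since $r_n$ is simultaneously the least fixed point and the least $t$ with $h(t)\leq 0$, both extractions yield the same value.
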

\begin{proof}
    Let $m=\lcm_{i<n}p_i$ and $w(t) = c_n+\sum_{i<n}c_i\ceil{\frac{t+\jitter_i}{p_i}}$. Then there is a remainder $\rho \in \set{0,\dots,m-1}$ and some $\lambda \in \Z_{\geq 0}$ such that $r_n = \rho+\lambda m$. We find that
    \[
        \rho+\lambda m
        = r_n
        = w(r_n)
        = w(\rho+\lambda m)
        = w(\rho)+\lambda m\cdot\sum_{i<n}\frac{c_i}{p_i}
    \]
    and thus, $\lambda = (w(\rho)-\rho)/((1-\sum_{i<n}\frac{c_i}{p_i})\cdot m)$ can be computed in time $\Oh(n)$ for a given $\rho$. Hence, we can compute $r_n$ by simply trying each $\rho$ in $\set{0,\dots,m-1}$ and compute $\lambda$ for this choice to find the smallest sum $\rho+\lambda m$ for which the computed $\lambda$ is a non-negative integer. This yields a total running time of $\Oh(n\cdot m)$.
\end{proof}

However, by using our reduction we can prove a Turing reduction as follows.

\begin{theorem}\label{solve-general-wcrt-with-mixing-set}
    The response time $r_n$ can be computed in time $\Oh(\max(Sn,T_{\mixing}\log p_{\max})) \leq \Oh(Sn\log p_{\max})$ where $S$ is a global upper bound on variable $s$ in any optimal solution to \eqref{wcrt-mixing-set} and $T_{\mixing}$ is a global upper bound on the time required to compute \eqref{wcrt-mixing-set}.
\end{theorem}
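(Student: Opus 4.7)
The plan is to split the computation into two regimes separated by the threshold $S$, and handle each with a different subroutine.

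For the \emph{small regime} $r_n<S$, I would use the classical monotone fixed-point iteration. Starting from $t_0=c_n\leq r_n$ and iterating $t_{k+1}=\Phi(t_k)$ with $\Phi(t)=c_n+\sum_{i<n}c_i\lceil(t+\jitter_i)/p_i\rceil$, the non-decreasing map $\Phi$ yields a non-decreasing sequence of integers bounded above by $r_n$. Each evaluation of $\Phi$ costs $\Oh(n)$ and, until stabilization, consecutive iterates differ by at least one. I would run the iteration, aborting as soon as either a fixed point is reached or some $t_k\geq S$ is produced. In the first case we have found $r_n$ directly; because the abort happens within at most $S$ iterations, the total cost is $\Oh(Sn)$.

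If instead the iteration ever produces $t_k\geq S$, monotonicity of $\Phi$ and $t_0\leq r_n$ guarantee $r_n\geq S$. I can now fall back on the conditional Karp reduction of \cref{reduction-to-mixing-set}: by the equivalence chain \eqref{response-mixing-equivalence}, for every $k\geq S$ the decision $\response([n-1],c_n)\leq_? k$ reduces in constant time to $\mixing([n-1],k)\leq_? k-c_n$. Using the upper bound $u\leq 2p_{\max}^2$ from \cref{pseudo-polynomial-interval-length}, I would binary search in $[S,u]$ for the smallest feasible $k$, which by the equivalence equals $r_n$. The search performs $\Oh(\log(u-S))=\Oh(\log p_{\max})$ calls to a \textsc{Mixing Set} solver, contributing $\Oh(T_{\mixing}\log p_{\max})$ time.

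Adding the two phases gives the claimed $\Oh(\max(Sn,T_{\mixing}\log p_{\max}))$ bound. For the simplified estimate $\Oh(Sn\log p_{\max})$, I would observe that any instance of \eqref{wcrt-mixing-set} can be solved in $\Oh(Sn)$ time by brute force: enumerate $s\in\{0,\dots,S-1\}$ and, for each, evaluate the forced completion $x_i(s)=\lceil(k+\jitter_i-s)/p_i\rceil$ together with its cost in $\Oh(n)$ time, retaining the minimum. This yields $T_{\mixing}\leq \Oh(Sn)$, and the outer bound collapses to $\Oh(Sn\log p_{\max})$. The main conceptual point is the clean use of monotonicity to detect in $\Oh(Sn)$ time which regime $r_n$ lies in, so the more expensive Mixing Set machinery is only invoked when it is actually applicable; there is no deep obstacle beyond verifying this case split.
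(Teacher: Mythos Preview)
Your proof is correct and follows the same overall strategy as the paper: split at the threshold $S$, handle the small regime in $\Oh(Sn)$, and binary-search the large regime via the conditional reduction to \textsc{Mixing Set}, then bound $T_{\mixing}\leq\Oh(Sn)$ by brute-forcing $s\in[0,S)$. The only (harmless) difference is in the small-regime subroutine: the paper first decides $r_n\leq_?S$ with a single Mixing Set call and then brute-forces every candidate $t\in[S-1]$, whereas you let the monotone fixed-point iteration both detect the regime and compute $r_n$ when it lies below $S$.
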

\begin{proof}[Proof of \cref{solve-general-wcrt-with-mixing-set}]
    We may run an algorithm as follows. First decide $\response(\gamma) \leq_? S$. If the answer is \textsc{yes}, then try out all candidates $t\in[S-1]$ in time $\Oh(Sn)$, i.e. we check whether $t$ is a solution to \eqref{wcrt-with-jitters-original}.
    If this reveals a feasible solution, return the smallest. Otherwise $r_n = S$. If the answer is \textsc{no}, then it holds that $r_n > S$.
    From \cref{pseudo-polynomial-interval-length} we know that $u - \ell \leq p_{\max}^2$ and thus, a subsequent binary search in $[\ell,u]$ leads to a total running time of $\Oh(\max(Sn,T_{\mixing}\log(u-\ell))) \leq \Oh(\max(Sn,T_{\mixing}\log p_{\max}))$. For example we can bound $T_{\mixing}$ as follows.
    We can compute $\mixing(k)$ in time $\Oh(Sn)$ by simply trying solution $(s,x(s))$ for each $s\in [0,S)$. Hence, $T_{\mixing} \leq \Oh(Sn)$ and this gives a total running time of $\Oh(Sn\log p_{\max})$.
\end{proof}

In the case without task release jitter, i.e. $\jitter_i = 0$ for all $i<n$, we can even prove an \emph{unconditional} Karp reduction as follows. For $\jitter = 0$ we get the following form of \eqref{wcrt-mixing-set}:
% \[\label{wcrt-mixing-set-without-jitter}
%     \mixing_{\jitter=0}(k) = \min\set{s+\textstyle\sum_{i<n}c_ix_i | s+p_ix_i \geq k\;\forall i<n,\quad  s\in\Z_{\geq 0},\quad x\in\Z^{n-1}}
%     \tag{$\mixing_{\jitter=0}$}
% \]
\[\label{wcrt-mixing-set-without-jitter}
    \mixing_{\jitter=0}(k) = \min\,\{\,s+\textstyle\sum_{i<n}c_ix_i\;|\;
     s+p_ix_i \geq k\;\forall i<n,\;
     s\in\Z_{\geq 0},\; x\in\Z^{n-1}\,\}
    \tag{$\mixing_{\jitter=0}$}
\]
In this case we can show that $S \leq k$. Apparently, $(s=k,x=0)$ is always a solution to \eqref{wcrt-mixing-set-without-jitter} since $s+p_ix_i = k$. It has the objective value $s+\sum_{i<n}c_ix_i = k$. Any solution $(s,x)$ to \eqref{wcrt-mixing-set-without-jitter} with $s>k$ has an objective value truly larger than $k$, since
\[
    s+\sum_{i<n}c_ix_i(s)
    \;=\; s+\sum_{i<n}c_i\ceil*{\frac{k-s}{p_i}}
    \;\geq\; s+\underbrace{(k-s)}_{<\,0}\cdot\underbrace{\sum_{i<n}\frac{c_i}{p_i}}_{<\,1}
    \;>\; s+(k-s)\cdot 1
    \;=\; k.
\]
Hence, any optimal solution $(s,x)$ to \eqref{wcrt-mixing-set-without-jitter} holds $s \leq k$ which implies that $S \leq k$.

Therefore, $\responsedual(k) = k-\mixing(k)$ is directly implied. Thus, we can use a binary search to compute $r_n$ in time $\Oh(T_{\mixing}\log p_{\max})$ where $T_{\mixing}$ is a global upper bound on the time required to compute $\mixing(k)$. This yields the following result.

\begin{theorem}
    If $\jitter=0$, then $r_n$ can be computed in time $\Oh(T_{\mixing}\log p_{\max})$ where $T_{\mixing}$ is a global upper bound on the time required to compute \eqref{wcrt-mixing-set}.
\end{theorem}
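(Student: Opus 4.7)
The plan is to combine the unconditional reduction established just above the theorem statement with a binary search over a polynomially bounded interval. The key observation already proved in the preceding paragraph is that, for $\jitter = 0$, every optimal solution $(s,x)$ to \eqref{wcrt-mixing-set-without-jitter} satisfies $s \leq k$; this eliminates the case analysis on $k$ that was needed in the harmonic case of \cref{wcrt-harmonic-periods}. Thus the hypothesis $k \geq S$ of \cref{reduction-to-mixing-set} is automatic, and the equivalence chain \eqref{response-mixing-equivalence} reduces each decision instance $\response([n-1], c_n) \leq_? k$ to a single call to \eqref{wcrt-mixing-set}, costing $T_{\mixing}$ time, for \emph{every} integer $k$.

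Next I would localize $r_n$ to an interval of polynomial length. Invoking \cref{wcrt-bounds} yields an explicit $[\ell, u]$ containing $r_n$, and \cref{pseudo-polynomial-interval-length} (using the schedulability utilization bound $\sum_{i \leq n} c_i/p_i \leq 1$, which we may assume, since otherwise the system is not schedulable and $r_n$ need not even exist) gives $u - \ell \leq p_{\max}^2$.

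The rest is a plain binary search on the integer interval $[\ell, u]$: at each step pick $\kappa = \lfloor (L+R)/2 \rfloor$, test $\mixing([n-1], \kappa) \leq_? \kappa - c_n$, and recurse into $[L,\kappa]$ or $[\kappa+1, R]$ accordingly. After $\Oh(\log(u - \ell)) = \Oh(\log p_{\max})$ iterations, the search narrows down to the exact value of $r_n$, giving total time $\Oh(T_{\mixing} \log p_{\max})$.

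The argument is essentially assembling pieces already in place, so no step presents genuine difficulty; the only subtlety worth verifying is that the inequality $S \leq k$ derived just above the theorem holds for \emph{every} $k \in [\ell, u]$, so that no separate treatment of a small-$k$ regime is required (in contrast to the harmonic-periods case, where \cref{alg:narrow,alg:catch} were needed precisely to handle $k < \max_{i \in I} p_i$).
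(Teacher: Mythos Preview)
Your proposal is correct and matches the paper's approach exactly: the paper's own argument is the single sentence preceding the theorem, which relies (as you make explicit) on the unconditional validity of $S\le k$ for $\jitter=0$ together with the interval bound $u-\ell\le p_{\max}^2$ from \cref{pseudo-polynomial-interval-length} to drive the binary search. One small wording issue: your justification for assuming the schedulability utilization bound (``$r_n$ need not even exist'') is inaccurate, since $r_n$ is well-defined whenever the \emph{general} utilization bound $\sum_{i<n}c_i/p_i<1$ holds; the schedulability bound is used here only to control the search interval length, and the paper invokes it implicitly in the same way.
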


\paragraph*{Small Utilization}
Similar to \ac{RM} scheduling (cf. \cite{DBLP:journals/jacm/LiuL73,DBLP:conf/afips/Serlin72}) the complexity of RTC decreases with a smaller utilization  also for \ac{FP} scheduling. Let $U = \sum_{i<n}c_i/p_i$ denote the utilization.

\begin{lemma}\label{bound-S-with-utilization}
    In general it holds that $S < (\sum_{i<n}c_i)/(1-U)$.
\end{lemma}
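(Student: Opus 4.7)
The plan is to bound the optimal $s$ by comparing a natural LP-style lower bound on the optimum value of \eqref{wcrt-mixing-set} against a concrete feasible solution. First, for any feasible pair $(s,x)$ I would drop the ceiling in $x_i \geq (k+\jitter_i-s)/p_i$ (which follows from the constraint $s+p_ix_i \geq k+\jitter_i$) and obtain the lower bound
\[
    s + \sum_{i\in I} c_i x_i
    \;\geq\; s\Big(1-\sum_{i\in I}\frac{c_i}{p_i}\Big) + \sum_{i\in I}\frac{c_i(k+\jitter_i)}{p_i}.
\]
Here I would invoke $\sum_{i\in I}c_i/p_i \leq U < 1$, which follows from $I\subseteq[n-1]$ and the general utilization bound, so the coefficient of $s$ is strictly positive.

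Next, I would exhibit the explicit feasible solution $s'=0$, $x_i' = \ceil{(k+\jitter_i)/p_i}$, and upper-bound its objective by $\lceil y\rceil \leq y+1$ to get
\(\sum_{i\in I} c_i(k+\jitter_i)/p_i + \sum_{i\in I} c_i.\)
Since any optimal $(s,x)$ satisfies the lower bound above and has objective at most the objective of this feasible solution, the two estimates can be combined to cancel the common term $\sum_{i\in I}c_i(k+\jitter_i)/p_i$. This yields
\[
    s\Big(1-\sum_{i\in I}\frac{c_i}{p_i}\Big) \;\leq\; \sum_{i\in I} c_i,
\]
and using $\sum_{i\in I}c_i/p_i\leq U$ and $\sum_{i\in I}c_i \leq \sum_{i<n}c_i$, one concludes $s \leq (\sum_{i<n}c_i)/(1-U)$, so $S$ is bounded accordingly.

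There is no serious obstacle: the only care needed is that the coefficient $1-\sum_{i\in I}c_i/p_i$ is strictly positive (so division is safe) and that the bound is monotone when replacing $I$ with $[n-1]$ in the weaker final form. Both are guaranteed by the general utilization bound. The proof is a clean one-shot LP-rounding comparison rather than a structural manipulation of solutions as in \cref{mixing-set-small-solution}.
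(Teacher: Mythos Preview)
Your argument is correct and takes a genuinely different route from the paper's. The paper proves the lemma by a \emph{shift} in the style of \cref{mixing-set-small-solution,mixing-set-small-utilization}: setting $S'=(\sum_{i<n}c_i)/(1-U)$ and assuming an optimal $s\geq S'$, it uses $\lceil a+b\rceil\leq\lceil a\rceil+\lceil b\rceil$ together with $\lceil y\rceil< y+1$ to show $f(s-S')<f(s)$, contradicting optimality. You instead sandwich the optimum between the LP lower bound obtained by dropping the ceilings in $x_i(s)$ and the value of the explicit feasible point $s=0$, $x_i=\lceil(k+\jitter_i)/p_i\rceil$; the common term $\sum_{i\in I}c_i(k+\jitter_i)/p_i$ cancels and the bound on $s$ falls out directly. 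Your approach is more elementary in that no shifted solution has to be constructed, while the paper's shift yields the slightly stronger structural fact that \emph{every} $s\geq S'$ is strictly dominated by $s-S'$, keeping the argument uniform with the other bounds on $S$ derived earlier. The monotonization from $I$ to $[n-1]$ at the end is harmless (and in fact unnecessary here, since the section already fixes $I=[n-1]$).
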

\begin{proof}
    Let $S' = (\sum_{i< n}c_i)/(1-U)$, define $f(s) = s+\sum_{i<n}c_ix_i(s)$ as the objective function of $\mixing(k)$, and assume that $s \geq S'$ is a solution to $\mixing(k)$. Then
    \begin{align*}
        f(s-S')
        &= s-S' + \sum_{i<n}c_i\ceil[\Big]{\frac{k+\jitter_i-s+S'}{p_i}}\\
        &\leq s - S' + \sum_{i<n}c_i\Big(\ceil[\Big]{\frac{k+\jitter_i-s}{p_i}}+\ceil[\Big]{\frac{S'}{p_i}}\Big)\\
        &= f(s) - S' + \sum_{i<n}c_i\ceil[\Big]{\frac{S'}{p_i}}
        < f(s) - S' + \sum_{i<n}c_i\Big(\frac{S'}{p_i}+1\Big)
        = f(s)
    \end{align*}
    reveals $s' = s-S'$ as a truly better solution to $\mixing(k)$.
\end{proof}
With the schedulability utilization bound this immediately assures that $S$ is pseudo-polynomial and using the general utilization bound only it suffices for example that $U \leq 1-p_{\max}^{-\Oh(1)}$. Since $u_1-\ell = (\sum_{i<n}c_i)/(1-U)$ this also implies a pseudo-polynomial algorithm by simply guessing the right value in the interval $[\ell,u]$. However, especially if $r_n > S$ our algorithm may be faster.

\section{Mixing Set}
\label{sec:mixing-set}

Our technique can be reversed to solve the \textsc{Mixing Set} problem by computing worst-case response times. This allows us to make efficient approaches to RTC applicable to solve \textsc{Mixing Set}. We simply do a substitution in \eqref{response-mixing-equivalence} (substitute $k \mapsto k'+\gamma$, $\gamma \mapsto \beta - k'$, $k' \mapsto k$ in this order) to achieve the following corollary.

\begin{corollary}\label{reverse-reduction}
    Let $k \geq 0$, and $\beta \geq S$. Then $\mixing(\beta) \leq k$ if and only if $\response(\beta-k) \leq \beta$.
\end{corollary}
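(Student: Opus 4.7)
The plan is to derive the corollary directly from the equivalence chain \eqref{response-mixing-equivalence} already established via \cref{reduction-to-mixing-set}. Recall that, under the hypothesis that the first argument is at least $S$, one has
\[
    \response(I,\gamma) \leq k \;\Leftrightarrow\; \mixing(I,k) \leq k-\gamma,
\]
valid whenever $k \geq S$. The corollary is nothing more than a change of parameters in this equivalence, so no new combinatorial argument is required.

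First I would carry out the three substitutions exactly as flagged in the statement. Setting $k \mapsto k'+\gamma$ transforms the equivalence into
\[
    \response(I,\gamma) \leq k'+\gamma \;\Leftrightarrow\; \mixing(I,k'+\gamma) \leq k',
\]
where the precondition now reads $k'+\gamma \geq S$. Substituting $\gamma \mapsto \beta - k'$ turns this into
\[
    \response(I,\beta-k') \leq \beta \;\Leftrightarrow\; \mixing(I,\beta) \leq k',
\]
and, crucially, the precondition collapses to the clean condition $\beta \geq S$, which is exactly the hypothesis of the corollary. Renaming $k' \mapsto k$ yields the desired equivalence.

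The only thing to check is that the substitutions stay within the regime in which \cref{reduction-to-mixing-set} applies. Since after the second substitution the side condition becomes $\beta \geq S$, and this is assumed, no feasibility issue arises. In particular, there is no obstacle to overcome here beyond being careful that each substitution is applied to both sides of the biconditional and to the precondition simultaneously; the substance of the proof is fully carried by the earlier equivalence and by the bound $S$ on the optimal $s$-value of $\mixing(I,\beta)$ established in the preceding section.
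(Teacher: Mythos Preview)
Your proof is correct and follows exactly the route the paper takes: the paper derives the corollary by performing the three substitutions $k \mapsto k'+\gamma$, $\gamma \mapsto \beta - k'$, $k' \mapsto k$ in \eqref{response-mixing-equivalence}, and you carry these out explicitly, including the check that the side condition collapses to $\beta \geq S$.
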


With the reverse reduction we can improve the running time for general capacities, if the entries of the right-hand side $b$ are large and close to $b_{\min}$:

\begin{lemma}\label{mixing-set-with-large-crowded-right-hand-side}
    Given capacities $a\in\Z_{\geq 1}^n$, a right-hand side $b\in\Z^n$ which holds that $\lcm_{j\leq n} a_j \leq b_i \leq b_{\min}+a_i$ for all $i\leq n$, and weights $w \in \Z_{\geq 0}^n$, one can compute
    % \[\label{mixing-with-crowded-right-hand-side}
    %     \textstyle
    %     \min\set{s+\sum_{i\leq n}w_ix_i|s+a_ix_i\geq b_i\;\forall i\leq n,\quad s\in\Z_{\geq 0},\quad x\in\Z^n}\tag{$\mixing_b$}
    % \]
    \[\label{mixing-with-crowded-right-hand-side}
        \textstyle
        \min\,\{\,s+\sum_{i\leq n}w_ix_i\;|\;
         s+a_ix_i\geq b_i\;\forall i\leq n,\;
         s\in\Z_{\geq 0},\; x\in\Z^n\,\}\tag{$\mixing_b$}
    \]
    in time $\Oh(T_{\response}\log b_{\max})$ where $T_{\response}$ is a global bound on the time required to compute \eqref{wcrt-decision}.
\end{lemma}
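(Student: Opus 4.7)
The plan is to rephrase the given instance as an RTC-derived mixing-set instance of the form \(\mixing(I,\beta)\) and then apply the reverse reduction (\cref{reverse-reduction}) inside a binary search on the objective value.

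First, translate the parameters: set \(p_i := a_i\), \(c_i := w_i\), \(\beta := b_{\min}\), and \(\jitter_i := b_i - b_{\min}\) for every \(i \in I := [n]\). The clustering hypothesis \(b_i \leq b_{\min} + a_i\) yields \(0 \leq \jitter_i \leq p_i\), so the \(\jitter_i\) are admissible period-constrained release jitters, and the constraint \(s + a_i x_i \geq b_i\) becomes \(s + p_i x_i \geq \beta + \jitter_i\). Hence \eqref{mixing-with-crowded-right-hand-side} coincides verbatim with \(\mixing(I, \beta)\) of \eqref{wcrt-mixing-set}.

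Next, verify the precondition \(\beta \geq S\) of \cref{reverse-reduction}. By the remark following \cref{reduction-to-mixing-set} together with \cref{mixing-set-small-utilization}, any optimal solution of \(\mixing(I, \beta)\) has \(s < \lcm_{j \in I} p_j = \lcm_{j \leq n} a_j\); the hypothesis \(\lcm_{j \leq n} a_j \leq b_i\) then yields \(\beta = b_{\min} \geq \lcm_{j \leq n} a_j \geq S\), so the reverse reduction is applicable.

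Now binary-search the optimum. By \cref{reverse-reduction}, for every candidate value \(\kappa\) one has \(\mixing(I, \beta) \leq \kappa\) iff \(\response(I, \beta - \kappa) \leq \beta\). The feasible solution \(s = b_{\max}, x = 0\) proves \(\mixing(I, \beta) \leq b_{\max}\), and obviously \(\mixing(I, \beta) \geq 0\); a monotone binary search over \(\kappa \in [0, b_{\max}]\) thus pinpoints the optimum \(\kappa^\star = \mixing(I, \beta)\) after \(\Oh(\log b_{\max})\) iterations, each invoking the \(\response\)-oracle exactly once at cost at most \(T_{\response}\). The optimal assignment \((s, x)\) can be reconstructed from the associated \(t = \response(I, \beta - \kappa^\star)\) via \(s := \beta - t\) and \(x_i := \lceil (b_i - s)/a_i \rceil\), giving total running time \(\Oh(T_{\response} \log b_{\max})\). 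The main care required is to check that the two sides of the hypothesis on \(b\) play complementary roles: the lower bound \(\lcm_{j \leq n} a_j \leq b_i\) unlocks \cref{reverse-reduction}, while the clustering \(b_i \leq b_{\min} + a_i\) turns the offsets \(b_i - b_{\min}\) into valid jitters, so that an RTC-style instance materialises at all.
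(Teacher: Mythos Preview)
Your proposal is correct and follows essentially the same route as the paper: set $\beta=b_{\min}$ and $\jitter_i=b_i-b_{\min}$ to recast $(\mixing_b)$ as $\mixing(I,\beta)$, use the $\lcm$ lower bound on $b$ to secure $\beta\geq S$, and then binary-search via \cref{reverse-reduction} with the trivial solution $(s=b_{\max},x=0)$ supplying the upper end of the search interval. One small nit: to bound $S$ you invoke \cref{mixing-set-small-utilization}, which requires the strict inequality $\sum_i w_i/a_i<1$ that is not part of the lemma's hypotheses; \cref{mixing-set-small-solution} is the correct reference here and already yields an optimal solution with $s<\lcm_{i\leq n}a_i\leq\beta$, which is all the reverse reduction needs.
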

\begin{proof}
    Choose $\beta = b_{\min}$ and $\jitter_i = b_i - b_{\min}$ for all $i \in [n]$. We get $\beta \geq \lcm_{i\leq n} a_i \geq S$ and $0 \leq \jitter_i \leq a_i$ for all $i\leq n$. This implies $b_i = \beta + \jitter_i$ and thus, we can write \eqref{mixing-with-crowded-right-hand-side} as
    \[
        \min\,\{\,s+\textstyle\sum_{i\leq n}w_ix_i\;|\;
         s+a_ix_i\geq \beta+\jitter_i\;\forall i\leq n,\;
         s\in\Z_{\geq 0},\; x\in\Z^n\,\}.
    \]
    Now, we use \cref{reverse-reduction} to start a binary search for the optimal $k$. We know that $\response(\beta-k)$ can be computed in time $T_{\response}$ and the objective value of our \textsc{Mixing Set} instance is upper bounded by $b_{\max}$ (since $(s=b_{\max},x=0)$ is always a solution with objective value $b_{\max}$) which implies a total running time of $\Oh(T_{\response}\log b_{\max})$.
\end{proof}
In fact, \cref{mixing-set-with-large-crowded-right-hand-side} implicitly solves the general case. We use it to prove the following.
\begin{theorem}\label{mixing-set-with-large-crowded-right-hand-side-general}
    Given capacities $a\in\Z_{\geq 1}^n$, a right-hand side $b\in\Z^n$, and weights $w \in \Z_{\geq 0}^n$, one can compute
    % \[
    %     \textstyle
    %     \min\set{s+\sum_{i\leq n}w_ix_i|s+a_ix_i\geq b_i\;\forall i\leq n,\quad s\in\Z_{\geq 0},\quad x\in\Z^n}\tag{$\mixing_b$}
    % \]
    \[\label{large-crowded-ip}
        \textstyle
        \min\,\{\,s+\sum_{i\leq n}w_ix_i\;|\;
         s+a_ix_i\geq b_i\;\forall i\leq n,\;
         s\in\Z_{\geq 0},\; x\in\Z^n\,\}\tag{$\mixing_b$}
    \]
    in time $\Oh(n\!+\!T_{\response}\log b_{\max})$ where $T_{\response}$ is a global bound on the time required to compute \eqref{wcrt-decision}.
\end{theorem}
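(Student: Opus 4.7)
The approach is to reduce the general problem to an instance satisfying the hypotheses of \cref{mixing-set-with-large-crowded-right-hand-side} by a component-wise shift of the right-hand side. The key observation is that substituting $x_i = y_i + k_i$ for an integer $k_i$ turns the constraint $s + a_i x_i \geq b_i$ into $s + a_i y_i \geq b_i - a_i k_i$ and shifts the objective by the additive constant $\sum_i w_i k_i$. Hence we may replace each $b_i$ by any $b'_i \equiv b_i \pmod{a_i}$ and solve the resulting equivalent instance.

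First I would compute $M = \lcm_{i \leq n} a_i$ in $\Oh(n)$ gcd operations and then, for each $i$, set $k_i = \floor*{(b_i - M)/a_i}$ and $b'_i = b_i - k_i a_i$. A short calculation gives $b'_i = M + ((b_i - M)\bmod a_i) \in [M, M + a_i)$, so $b'_{\min} \geq M \geq \lcm_{j \leq n} a_j$ and $b'_i \leq b'_{\min} + a_i$ for every $i$ --- exactly the hypotheses of \cref{mixing-set-with-large-crowded-right-hand-side}. Applying that lemma to the shifted instance $(a, b', w)$ solves it in time $\Oh(T_{\response} \log b'_{\max})$. Adding the constant $\sum_i w_i k_i$ recovers the optimum of the original instance, and if a witness is needed, $x^*_i = y^*_i + k_i$ transforms it back in $\Oh(n)$ further operations.

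The reduction itself is a straightforward change of variables, well-defined for all integer $b_i$ (with $k_i$ possibly negative when $b_i < M$) and correct by construction, so I expect no substantive obstacle. The one point worth flagging is that $b'_{\max} < M + a_{\max}$, so the $\log b_{\max}$ factor in the statement implicitly refers to the shifted right-hand side handed to the subroutine and bundles a $\log M$ contribution inherited from the hypothesis of \cref{mixing-set-with-large-crowded-right-hand-side}; there seems to be no way around this without sharpening the generic bound $S \leq \lcm_j a_j$ from \cref{mixing-set-small-solution}.
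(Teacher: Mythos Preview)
Your proposal is correct and essentially identical to the paper's proof: both shift each $b_i$ to $b'_i = b_i - \floor{(b_i-M)/a_i}\,a_i$ (the paper writes this as $b_i + \ceil{(M-b_i)/a_i}\,a_i$, which is the same quantity), verify $M \le b'_i < M+a_i$, and then invoke \cref{mixing-set-with-large-crowded-right-hand-side}. Your remark that the resulting $\log b'_{\max}$ really hides a $\log M$ contribution is a fair observation that the paper glosses over.
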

\begin{proof}
    Let $m = \lcm_{j\leq n}a_j$. We define a new right-hand side $b' \in \Z^n$ by
    \(
        b_i' = b_i + \ceil{(m - b_i)/a_i}a_i
    \)
    for each $i\leq n$. It is easy to see that $m \leq b_i' \leq m+a_i \leq b'_{\min}+a_i$ for all $i\leq n$. Also, we have
    % \begin{align*}
    %     &\min\set{s+\sum_{i\leq n}w_ix_i|s+a_ix_i \geq b'_i\forall i\leq n, s\in\Z_{\geq 0},x\in\Z^n}\\
    %     &= \min\set{s+\sum_{i\leq n}w_i\ceil*{\frac{b'_i-s}{a_i}}| s\in\Z_{\geq 0}}\\
    %     &= \min\set{s+\sum_{i\leq n}w_i\ceil*{\frac{b_i + \ceil*{\frac{m - b_i}{a_i}}a_i-s}{a_i}}| s\in\Z_{\geq 0}}\\
    %     &= \sum_{i\leq n}w_i\ceil*{\frac{m - b_i}{a_i}} + \min\set{s+\sum_{i\leq n}w_i\ceil*{\frac{b_i-s}{a_i}}| s\in\Z_{\geq 0}}\\
    %     &= \sum_{i\leq n}w_i\ceil*{\frac{m - b_i}{a_i}} + \min\set{s+\sum_{i\leq n}w_ix_i| s+a_ix_i \geq b_i\forall i\leq n, s\in\Z_{\geq 0},x\in\Z^n}
    % \end{align*}
    \begin{align*}
        &\min\set{s+\textstyle\sum_{i\leq n}w_ix_i|s+a_ix_i \geq b'_i\forall i\leq n, s\in\Z_{\geq 0},x\in\Z^n}\\
        &= \min\set{s+\textstyle\sum_{i\leq n}w_i\ceil*{\frac{b'_i-s}{a_i}}| s\in\Z_{\geq 0}}\\
        %&= \min\set{s+\textstyle\sum_{i\leq n}w_i\ceil*{\frac{b_i + \ceil*{\frac{m - b_i}{a_i}}a_i-s}{a_i}}| s\in\Z_{\geq 0}}\\
        &= \textstyle\sum_{i\leq n}w_i\ceil*{\frac{m - b_i}{a_i}} + \min\set{s+\textstyle\sum_{i\leq n}w_i\ceil*{\frac{b_i-s}{a_i}}| s\in\Z_{\geq 0}}\\
        &= \textstyle\sum_{i\leq n}w_i\ceil*{\frac{m - b_i}{a_i}} + \eqref{large-crowded-ip}%\\
        %&\quad + \min\set{s+\textstyle\sum_{i\leq n}w_ix_i| s+a_ix_i \geq b_i\forall i\leq n, s\in\Z_{\geq 0},x\in\Z^n}
    \end{align*}
    where we applied
    \[\ceil*{\frac{b_i'-s}{a_i}} = \ceil*{\frac{b_i+\ceil[\big]{\frac{m-b_i}{a_i}}a_i-s}{a_i}} = \ceil*{\frac{m-b_i}{a_i}}+\ceil*{\frac{b_i-s}{a_i}}.\]
    Hence, we can solve the system for $b'$ with \cref{mixing-set-with-large-crowded-right-hand-side} and subtract $\sum_{i\leq n}w_i\ceil{(m - b_i)/a_i}$ to obtain the optimal value. This gives the desired running time.
\end{proof}

Now, we aim to identify the instances of \textsc{Mixing Set} which may be solved by using our reversed reduction in combination with efficient algorithms for the computation of worst-case response times. By turning to harmonic periods $p_i$ and using \cref{reverse-reduction} again, we can apply Nguyen et al. \cite{DBLP:conf/iwoca/NguyenGJ22} to compute
\[
    \textstyle
    \min\set{s+\sum_{i<n}c_ix_i|s+p_ix_i\geq \beta\;\forall i<n,\; s\in\Z_{\geq 0},\; x\in\Z^{n-1}}
\]
in time $\Oh(n\log n+n\log\beta)$ for any $\beta \geq \max_{i<n} p_i$ by computing the response times
\[
    \min\set{t|t \geq \beta - k + \textstyle\sum_{i<n} \ceil[\big]{\frac{t}{p_i}}c_i,\; t \in \Z_{\geq 0}}
\]
in a binary search for the smallest feasible $k$. It suffices to search in the interval $[0,\beta]$ since $(s=\beta,x=0)$ is always a solution with value $\beta$. Remark that the usual running time to solve instances of \textsc{Mixing Set} with harmonic capacities is $\Oh(n^2)$. In a more classical formulation this yields the following theorem.

\begin{theorem}
    Given harmonic capacities $a\in\Z_{\geq 1}^n$, a right-hand side value $\beta\geq a_{\max}$, and weights $w \in \Z_{\geq 0}^n$, one can compute
    % \[\label{mixing-beta}
    %     \textstyle
    %     \min\set{s+\sum_{i\leq n}w_ix_i|s+a_ix_i\geq \beta\;\forall i\leq n,\quad s\in\Z_{\geq 0},\quad x\in\Z^n}\tag{$\mixing_{\beta}$}
    % \]
    \[\label{mixing-beta}
        \textstyle
        \min\,\{\,s+\sum_{i\leq n}w_ix_i\;|\;
         s+a_ix_i\geq \beta\;\forall i\leq n,\;
         s\in\Z_{\geq 0},\; x\in\Z^n\,\}\tag{$\mixing_{\beta}$}
    \]
    in time $\Oh(n\log n+n\log\beta) \leq \Oh(n\log(n+\beta))$.
\end{theorem}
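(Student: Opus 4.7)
The plan is to deduce this from the preceding reduction together with the harmonic jitter-free RTC algorithm of Nguyen et al.\ \cite{DBLP:conf/iwoca/NguyenGJ22}, exactly as sketched in the paragraph above the statement. First I would verify the hypotheses of \cref{reverse-reduction} for the instance at hand: for harmonic $a\in\Z_{\geq 1}^n$ we have $\lcm_{i\leq n}a_i=a_{\max}$, so the assumption $\beta\geq a_{\max}$ together with \cref{mixing-set-small-utilization} yields $\beta\geq S$, where $S$ is an upper bound on $s$ in any optimal solution to the associated instance of \eqref{mixing-set}. Interpreting the capacities $a_i$ as periods $p_i$, weights $w_i$ as execution times $c_i$, and setting all jitters $\jitter_i=0$, the instance \eqref{mixing-beta} coincides with the jitter-free mixing set that appears on the left of \cref{reverse-reduction}.

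Next I would binary search for the smallest $k$ with $\mixing_\beta\leq k$. A trivial upper bound $k\leq\beta$ comes from the feasible solution $(s=\beta, x=0)$, and a lower bound $k\geq 0$ is immediate, so the binary search operates on the interval $[0,\beta]$ and uses $\Oh(\log\beta)$ iterations. By \cref{reverse-reduction}, each decision $\mixing_\beta\leq_? k$ reduces in constant time to the decision $\response([n],\beta-k)\leq_?\beta$ for the harmonic jitter-free RTC instance with data $(c_i,p_i)=(w_i,a_i)$.

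The remaining ingredient is the cost of these RTC decisions. Here I would invoke the harmonic jitter-free RTC algorithm of Nguyen et al., which runs in $\Oh(n\log n)$ time. To obtain the claimed bound $\Oh(n\log n+n\log\beta)$ rather than $\Oh(n\log n\cdot\log\beta)$, I would argue that the $\Oh(n\log n)$ cost of that algorithm is dominated by a single preprocessing step (sorting the periods and building the associated harmonic structure), after which each decision instance against a fresh threshold can be answered in $\Oh(n)$ linear work by re-evaluating the implicit piecewise step function along the already-prepared order. Amortizing the preprocessing over all $\Oh(\log\beta)$ decisions yields the stated running time $\Oh(n\log n+n\log\beta)\leq\Oh(n\log(n+\beta))$.

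The main obstacle is precisely this last amortization: one has to be careful that the preprocessing structure used by the Nguyen et al.\ algorithm really is independent of the right-hand side $\beta-k$ of the RTC instance, so that only the linear-time evaluation phase is repeated in each binary-search step. If that were not the case, the reduction would still yield a valid $\Oh(n\log n\log\beta)$ bound, so the only delicate point is tightening the analysis, not correctness.
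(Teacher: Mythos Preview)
Your proposal is correct and follows the paper's own argument essentially verbatim: reverse reduction via \cref{reverse-reduction}, binary search on $[0,\beta]$ justified by the trivial solution $(s=\beta,x=0)$, and then invoking Nguyen et al.\ with the sorting step amortised across all $\Oh(\log\beta)$ iterations, exactly as the paper states in the paragraph following the theorem. One small nit: to get $\beta\geq S$ you cite \cref{mixing-set-small-utilization}, which needs the strict bound $\sum_i w_i/a_i<1$; the cleaner reference is \cref{mixing-set-small-solution}, which unconditionally guarantees an optimal solution with $s<\lcm_i a_i=a_{\max}\leq\beta$, and existence of one such optimum is all the proof of \cref{reduction-to-mixing-set} (hence \cref{reverse-reduction}) actually uses.
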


The algorithm of Nguyen et al. runs in linear time $\Oh(n)$ after an initial sorting step in time $\Oh(n\log n)$ to sort the tasks by their periods. As we seek to apply the algorithm multiple times, we can save the time to sort by only doing it once.

However, for arbitrary capacities and the special case $b_i=\beta$ for all $i\leq n$ and $\beta \geq \lcm_{j\leq n}a_j$ we can also derive the following corollary from \cref{mixing-set-with-large-crowded-right-hand-side}.
\begin{corollary}
    Given capacities $a\in\Z_{\geq 1}^n$, a right-hand side value $\beta\geq\lcm_{i\leq n}a_i$, and weights $w \in \Z_{\geq 0}^n$, one can compute
    % \[
    %     \textstyle
    %     \min\set{s+\sum_{i\leq n}w_ix_i|s+a_ix_i\geq \beta\;\forall i\leq n,\quad s\in\Z_{\geq 0},\quad x\in\Z^n}\tag{$\mixing_b$}
    % \]
    \eqref{mixing-beta}
    in time $\Oh(T_{\response}\log\beta)$ where $T_{\response}$ is a global upper bound on the time required to compute \eqref{wcrt-decision}.
\end{corollary}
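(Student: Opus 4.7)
The plan is to derive this result as a direct specialization of \cref{mixing-set-with-large-crowded-right-hand-side}, so the main work is simply to check that its hypotheses hold for the uniform right-hand side $b_i = \beta$ and then propagate the running time.

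First I would define the right-hand side vector $b\in\Z^n$ by $b_i = \beta$ for every $i\leq n$, which turns the given instance into an instance of \eqref{mixing-with-crowded-right-hand-side}. With this definition we have $b_{\min} = b_{\max} = \beta$, so the sandwich condition required by \cref{mixing-set-with-large-crowded-right-hand-side} becomes
\[
    \lcm_{j\leq n}a_j \;\leq\; \beta \;=\; b_i \;\leq\; \beta + a_i \;=\; b_{\min}+a_i,
\]
which holds for every $i\leq n$ by the assumption $\beta\geq\lcm_{j\leq n}a_j$ together with $a_i\geq 1$. Hence all preconditions of the lemma are satisfied with no further modification of the instance.

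Applying \cref{mixing-set-with-large-crowded-right-hand-side} now yields an algorithm computing the desired optimum in time $\Oh(T_{\response}\log b_{\max})$. Substituting $b_{\max} = \beta$ gives the claimed bound $\Oh(T_{\response}\log\beta)$. I do not foresee a real obstacle here; the only point worth being slightly careful about is the chain of substitutions that was used inside \cref{reverse-reduction} and \cref{mixing-set-with-large-crowded-right-hand-side} (namely $\jitter_i = b_i-b_{\min}$), which in our uniform case collapses to $\jitter_i = 0$ for all $i$, so the associated RTC instance that is invoked in the binary search is in fact the jitter-free one, and the corollary reduces to applying the RTC oracle $\Oh(\log\beta)$ many times.
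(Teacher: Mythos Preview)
Your proposal is correct and matches the paper's own derivation: the paper explicitly states that this corollary follows from \cref{mixing-set-with-large-crowded-right-hand-side} by taking $b_i=\beta$ for all $i$, and you have verified the hypotheses and propagated $b_{\max}=\beta$ exactly as intended. Your additional remark that the resulting jitters collapse to $\jitter_i=0$ is a nice observation not stated in the paper, but the core argument is the same.
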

%Of course this is due to the fact that the resulting task sets are jitter-free, which allows also to use the upper bound of \cref{wcrt-bounds-without-jitters}.

\section{4-Block Integer Programming}
\label{application-to-4-block}

Block-structured integer programming has received major interest in the recent past. Namely, the three most important lines of research are $n$-fold integer programming, $2$-stage stochastic integer programming, and $4$-block integer programming \cite{DBLP:conf/icalp/EisenbrandHK18,DBLP:conf/esa/0009K0S20,arxiv/2201.05874,DBLP:conf/ipco/HemmeckeKW10}. The last one is the most general but also the least understood flavor of block-structured integer programs. In this section we will consider 4-block integer programs with only one coupling constraint inequality. Furthermore, we consider a simpler objective function which addresses at most two \emph{bricks}. From \cref{4-block-hardness} it follows that this problem is still hard to approximate to any constant factor.
In fact, the literature on $4$-block integer programs is not unified in its formal presentation. Here we consider $4$-block integer programs as follows.

Consider matrices $A_1,\dots,A_n$, $B_1,\dots,B_n$, $C_1,\dots C_n$, and $D$ such that $A_i \in \Z^{r\times t}$, $B_i \in \Z^{r\times s}$, and $C_i\in\Z^{q\times t}$ for all $i\in[n]$ and $D\in\Z^{q\times s}$. We aim to compute
\[\num\label{4-block}
    \min\set{w^\intercal x|\mathcal{B}x = b, \; 0 \leq x \leq u, \; x\in \Z^{s+nt}}
\]
where
%\scalebox{0.8}{
\[
    \mathcal{B}=\begin{pmatrix}
        D      & C_1 & \cdots & C_n \\
        B_1    & A_1                \\
        \vdots &     & \ddots       \\
        B_n    &     &        & A_n
    \end{pmatrix}.
\]
%}

We use the notion of \emph{bricks} 
%$w^{(0)},w^{(1)},\dots,w^{(n)}$ and $x^{(0)},x^{(1)},\dots,x^{(n)}$
to refer to different areas of $w$ and $x$. For each $i\in\set{1,\dots,n}$ let $w^{(i)} = (w_{(i-1)t+1},\dots,w_{it})$ and $x^{(i)} = (x_{(i-1)t+1},\dots,x_{it})$ denote the $i$-th bricks of $w$ and $x$. Furthermore, let $w^{(0)} = (w_1,\dots,w_s)$ and $x^{(0)} = (x_1,\dots,x_s)$ denote the $0$-th bricks of $w$ and $x$.

In the following we assume that there is exactly one coupling constraint \emph{inequality} and that there is a $j \in [n]$ such that $w^{(i)}=0$ for all $i\in[n]\setminus\set{j}$. In other words, the objective function addresses exactly two bricks of the solution and at least one of them is the first brick.
Let $a \in \Z^{s+nt}$ denote the first row of $\mathcal{B}$. We aim to compute
\[\num\label{simple-4-block}
    \tau=\min\set{w^\intercal x|a^\intercal x \geq b_0,\;\mathcal{B}'x = b, \; 0 \leq x \leq u, \; x\in \Z^{s+nt}}
\]
where
%\scalebox{0.8}{
\[
    \mathcal{B}'=\begin{pmatrix}C_1&B_1\\\vdots&&\ddots\\C_n&&&B_n\end{pmatrix}.
\]
%}
For the decision problem $\tau \leq_? k$ we get the following dual variant:
\[
    \max\set{a^\intercal x|w^\intercal x\leq k,\;\mathcal{B}'x = b,\; 0 \leq x \leq u, \; x\in \Z^{s+nt}} \geq_? b_0
\]
We introduce a slack variable $y\geq 0$ to replace inequality \enquote{$w^\intercal x\leq k$} with equation \enquote{$w^\intercal x+y = k$}
% \[
%     \max\,\{\,a^\intercal x\;|\;
%      w^\intercal x+y= k,\;\mathcal{B}'x = b,\; 0 \leq x \leq u,\;
%      x\in \Z^{s+nt}, \; y \in \Z_{\geq 0}\,\}
% \]
and by placing this equation on top of the $j$-th block matrices we get a new matrix
%\scalebox{0.8}{
\[
    \mathcal{B}''=\begin{pmatrix}
         \tikz[scale=0.6]{
            \footnotesize
            \draw (0,4) rectangle node {$C_1$} (1,5);
            \draw (1,4) rectangle node {$B_1$} (2,5);
            \draw (0,3) rectangle node {$\overbar{w}^{(0)}$} (1,3.4);
            \draw (4,3) rectangle node {$1$} (4.4,3.4);
            \draw (3,3) rectangle node {$\overbar{w}^{(j)}$} (4,3.4);
            \draw (0,2) rectangle node {$C_j$} (1,3);
            \draw [dashed] (4,2) -- (4.4,2) -- (4.4,3);
            \draw (3,2) rectangle node {$B_j$} (4,3);
            \draw (0,0) rectangle node {$C_n$} (1,1);
            \draw (5,0) rectangle node {$B_n$} (6,1);
         }
    \end{pmatrix}
\]
%}

where $\overbar{w}^{(0)}=(w^{(0)})^\intercal$ and $\overbar{w}^{(j)}=(w^{(j)})^\intercal$. Finally, we arrive at the $2$-stage stochastic integer program
\[\num\label{2-stage}
    \max\set{a^\intercal x|\mathcal{B}''x = b',\; 0 \leq x \leq u, \; x\in \Z^{s+nt}, \; y \in \Z_{\geq 0}}
\]
where $b'$ is the modified right-hand side with value $k$.
Use, for example, \cite{DBLP:conf/esa/CslovjecsekEPVW21} to solve it.

\begin{theorem}
    Program \eqref{simple-4-block} can be solved in time $\Oh(T_{\eqref{2-stage}}\cdot\log(2H))$ where $T_{\eqref{2-stage}}$ is a global upper bound on the time to solve \eqref{2-stage} and $H$ is an upper bound on the abs. objective value of an optimal solution to \eqref{simple-4-block}.
\end{theorem}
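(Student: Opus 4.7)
The plan is to instantiate the dualization scheme of \cref{dualision} with $X = \set{x \in \Z^{s+nt} | \mathcal{B}'x = b,\; 0\leq x \leq u}$, $f(x) = w^\intercal x$, $g(x) = a^\intercal x$, and right-hand side $b_0$. The paragraph immediately preceding the theorem already carries out the dualization of the decision problem $\tau \leq_? k$ and, by introducing the slack variable $y \geq 0$ so that $w^\intercal x + y = k$ becomes an equality, rewrites it as a single instance of \eqref{2-stage}. Hence an oracle call to \eqref{2-stage} decides $\tau \leq_? k$ in time $T_{\eqref{2-stage}}$.

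What is left to do is twofold: (i) check that the matrix $\mathcal{B}''$ really has the $2$-stage stochastic shape so that an algorithm such as \cite{DBLP:conf/esa/CslovjecsekEPVW21} actually applies, and (ii) give a search range for $k$ of logarithmic length in $H$. For (i) I would use the standing assumption that $w^{(i)}=0$ for every $i \in [n]\setminus\set{j}$: the new row $(\overbar{w}^{(0)},0,\dots,0,\overbar{w}^{(j)},0,\dots,0,1)$ for the equation $w^\intercal x + y = k$ touches only the coupling variables $x^{(0)}$ and the single block $x^{(j)}$ (together with the slack $y$, which can be declared as an additional variable of block $j$). Thus the row can be appended to the $j$-th block row of $\mathcal{B}'$ without creating any new coupling across blocks, and the resulting system is genuinely a $2$-stage stochastic IP. For (ii) I would observe that the optimum $\tau$ satisfies $|\tau|\leq H$ by definition of $H$, so it suffices to binary-search $k$ over the integer interval $\interval{-H}{H}$, which takes $\Oh(\log(2H))$ iterations.

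Combining the $\Oh(\log(2H))$ decision queries with the per-query cost $T_{\eqref{2-stage}}$ yields the claimed bound $\Oh(T_{\eqref{2-stage}}\cdot \log(2H))$. The main conceptual obstacle is the preservation of the $2$-stage stochastic structure in the dualized program; this is exactly where the hypothesis that the objective touches only two bricks becomes essential, since otherwise the extra row $w^\intercal x + y = k$ would introduce nonzeros in every block column and the reformulation would remain a genuine $4$-block problem instead of dropping to $2$-stage stochastic. Once this structural point is in place, the rest of the argument is routine bookkeeping for the binary search.
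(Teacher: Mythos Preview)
Your proposal is correct and follows essentially the same route as the paper: the text preceding the theorem already performs the dualization and the slack-variable rewriting into \eqref{2-stage}, and you correctly supply the two remaining routine steps, namely that the hypothesis $w^{(i)}=0$ for $i\in[n]\setminus\set{j}$ is precisely what keeps $\mathcal{B}''$ in $2$-stage stochastic shape, and that binary searching $k$ over $\interval{-H}{H}$ costs $\Oh(\log(2H))$ oracle calls.
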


Using the state-of-the-art algorithm for 2-stage stochastic integer programming by Cslovjecsek et al. \cite{DBLP:conf/esa/CslovjecsekEPVW21} we can bound $T_{\eqref{2-stage}} \leq 2^{(2\Delta)^{\Oh(s(s+t))}}\cdot\log^{\Oh(st)}n$ by using parallel computation where $\Delta = \max\set{\norm{\mathcal{B}}_{\infty},\norm{w}_{\infty}}$ is the largest absolute value of the entries in $\mathcal{B}$ and $w$.
To give a simple bound on $H$ one can use standard arguments (e.g. \cite[Theorem 13.5]{DBLP:books/ph/PapadimitriouS82}) to show that $\log(2H) \leq \Oh(\log(st\norm{w}_1\norm{b}_{\infty})+(nr+q)\log((nr+q)\norm{A}_{\infty}))$.

\section{Conclusion}

To the best of our knowledge the presented connection between RTC and \textsc{Mixing Set} was unknown before. In the case of harmonic periods we derived a simple and efficient algorithm to solve RTC in polynomial time even in the presence of task release jitters.

There are several avenues for future work. Although our results generalize over existing work (e.g. \cite{DBLP:conf/rtss/BonifaciMMW13,DBLP:conf/iwoca/NguyenGJ22}) there is more space for generalization. For example, the relaxation to arbitrary deadlines (e.g. \cite{DBLP:journals/tc/BiniNRB09,DBLP:conf/rtss/Lehoczky90,DBLP:conf/rtss/BiniPD15,DBLP:journals/rts/GrassN18}), which potentially exceed their task's period, is an exciting direction.

Various authors of previous results where concerned about the investigation of fixed point iterations for which they proved convergence (e.g. \cite{DBLP:journals/iee/AudsleyBRTW93,DBLP:journals/rts/TindellBW94,DBLP:conf/rtss/SjodinH98,DBLP:conf/rtss/BiniPD15}). However, to the best of our knowledge they do not give worst-case bounds for the actual running time, i.e. the number of steps to arrive at the smallest fixed point. Therefore, it remains as an interesting question how our algorithms compete in practice.

Another avenue is to ask for new complexity results. Ekberg \cite{DBLP:conf/rtss/Ekberg20} did not consider task release jitters. Therefore, a natural question is: Where are the utilization borders between polynomial and \NP-hard schedulability testing in the presence of task release jitters?

%%
%% The next two lines define the bibliography style to be used, and
%% the bibliography file.
% \bibliographystyle{ACM-Reference-Format}
% \bibliography{bib}

\bibliographystyle{IEEEtran}
\bibliography{bib}

\clearpage

%%
%% If your work has an appendix, this is the place to put it.
\appendix

\section{Hardness of Approximation of 4-Block IPs}
\label{4-block-hardness}

Eisenbrand and Rothvoß \cite{DBLP:conf/rtss/EisenbrandR08} proved that worst-case response times cannot be approximated to a constant factor, unless $\op{P}=\NP$, even if the task system is jitter-free and has a total utilization truly smaller than $1$. Here we reduce RTC to a 4-block integer linear program. The worst-case response time of a task system without jitters is
\[
    r_n = \min\set{t|t\geq c_n+\textstyle\sum_{i<n}\ceil[\big]{\frac{t}{p_i}}c_i, \; t\in\Z_{\geq 0}}
\]
which may be restated as an integer linear program:
\[
    r_n = \min\set{t|t \geq c_n+\textstyle\sum_{i<n}x_ic_i, \; p_ix_i \geq t \;\forall i\in[n-1], \; t\in\Z_{\geq 0}, \; x\in\Z^{n-1}}.
\]
Thus, we can introduce a 4-block matrix $\mathcal{A}\in\Z^{n\times n}$ and a right-hand side $b\in\Z_{\geq 0}^n$ leading to
% \[
%     r_n = \min\set{t|t - \sum_{i=1}^{n-1}x_ic_i \geq c_n, -t+p_ix_i \geq 0 \forall i\in[n-1], t\in\Z_{\geq 0}, x\in\Z^{n-1}}
% \]
\[
    r_n = \min\set{t|\mathcal{A}\begin{pmatrix}t\\x\end{pmatrix}\geq b, \; t\in\Z_{\geq 0}, \; x\in\Z^{n-1}}
\]
where
\[
    \mathcal{A} = \begin{pmatrix}
        \begin{array}{c|ccccc}
             1     & -c_1 & \cdots & -c_{n-1}\\\hline
            -1     &  p_1 &        &         \\
            \vdots &      & \ddots &         \\
            -1     &      &        &  p_{n-1}
        \end{array}
    \end{pmatrix}
    \quad
    \text{and}
    \quad
    b = \begin{pmatrix}c_n\\\hline0\\\vdots\\0\end{pmatrix}.
\]
As $p_ix_i \geq t \geq 0$ and $p_i\geq 1$ imply that $x_i \geq 0$ this leads to the following formulation as a 4-block integer program:
\[
    r_n = \min\set{x_1|\mathcal{A}x\geq b, \; x \geq 0,\; x\in\Z^n}
\]

Therefore, it is NP-hard to approximate 4-block ILPs to a constant factor, even if all block matrices have size $1\times 1$, all constraints are inequalities, the objective function addresses the first variable only, the variables are unbounded, and the right-hand side has at most one non-zero entry.

%\section{Instances}

%\subsection{Response time computation}

% \begin{example}\label{bonifaci-fails}
%     We define a task set with three tasks as follows. For some $k\in\Z_{\geq 1}$ and $n=3$ define periods $p_1 = 6k$, $p_2 = 3k$, $p_3 = 6k$ and worst-case processing times $c_1 = 4k-2$, $c_2 = 1$, and $c_3 = 2k$. Furthermore, set the release jitter values to $\jitter_1 = \jitter_2 = \jitter_3 = 1$. We get $p_{\max} = 6k$ and a full utilization $\sum_{i=1}^3c_i/p_i = 1$. The work function $w(t) = c_n+\sum_{i<n}c_i\ceil{(t+\jitter_i)/p_i}$ for the worst-case response time of task $\tau_3$ holds that
% \[
%     w(t) = 2k+(4k-2)\ceil[\Big]{\frac{t+1}{6k}}+\ceil[\Big]{\frac{t+1}{3k}}.
% \]
% Now, the smallest multiple $ap_{\max}$ which holds $r_n \leq ap_{\max}$, i.e. $w(ap_{\max}) \leq ap_{\max}$, is reached for $a=3$ where
% \[
%     w(3p_{\max})
%     = w(18k)
%     = 2k+(4k-2)\ceil[\Big]{\frac{18k+1}{6k}}+\ceil[\Big]{\frac{18k+1}{3k}}
%     \leq 18k
%     = 3p_{\max}.
% \]
% Hence, the algorithm of Bonifaci et al. will search for $r_n$ in the interval $[2p_{\max},3p_{\max}]$. However, it holds that
% \[
%     w(2p_{\max}-1)
%     = w(12k-1)
%     = 2k+(4k-2)\ceil[\Big]{\frac{12k}{6k}}+\ceil[\Big]{\frac{12k}{3k}}
%     \leq 12k-1
%     = 2p_{\max}-1
% \]
%     which means that there is at least one solution in $[0,2p_{\max}-1]$. Hence, the algorithm does not find the optimal solution. The example is depicted in \cref{fig:bonifaci-fails}.
% \end{example}

\section{The Bound of Observation \ref{mixing-set-small-solution} is Tight for General Instances}
\label{lcm-bound-for-S-is-tight}

% For $n=2$ let $w = (2,4)$, $a=(4,8)$, and $b=(7,7)$. Then $\sum_{i\leq n} w_i/a_i = 2/4+4/8 = 1$ and for
% \[
% f(s) = s+\sum_{i\leq n}w_i\ceil*{\frac{b_i-s}{a_i}} = s+2\ceil*{\frac{7-s}{4}}+4\ceil*{\frac{7-s}{8}}
% \]
% we get the table
% \begin{center}
% \begin{tabular}{c|cccccccc}
%     $s$    & 0 & 1 & 2  & 3 & 4  & 5  & 6  & 7 \\\hline
%     $f(s)$ & 8 & 9 & 10 & 9 & 10 & 11 & 12 & 7
% \end{tabular}
% \end{center}
% where $f(\lcm_{i\leq n}a_i - 1) = f(7) = 7 = \min\set{f(s)|s\in\Z_{\geq 0}}$ and $f(s) > 7$ for $0 \leq s < 7$.

Here we prove that for optimal solutions to general instances of \textsc{Mixing Set} the value $\lcm_{i\leq n}a_i-1$ is indeed the best upper bound on $s$.
Given an arbitrary number $n\in\Z_{\geq 2}$ we define an instance of \textsc{Mixing Set} by $w_i = 2^i$, $a_i = n\cdot2^i$, and $b_i = n\cdot2^n-1$ for all $i\in[n]$. Remark that $\lcm_{i\leq n}a_i = a_n = n\cdot2^n$ and $\sum_{i\leq n}w_i/a_i = \sum_{i\leq n}1/n = 1$. We get the objective function
\[
    f(s)
    = s+\sum_{i\leq n}w_i\ceil*{\frac{b_i-s}{a_i}}
    = s+\sum_{i\leq n}2^i\ceil*{\frac{n\cdot2^n-1-s}{n\cdot2^i}}.
\]
First consider $0 \leq s \leq n\cdot2^n-2$. We have $1 \leq n\cdot2^n-1-s \leq n\cdot2^n-1$ and at least for $i=n$ this yields
\[
    \ceil*{\frac{n\cdot2^n-1-s}{n\cdot2^i}} = 1 > \frac{n\cdot2^n-1-s}{n\cdot2^i}
\]
which implies that
\[
    f(s)
    = s+\sum_{i\leq n}2^i\ceil*{\frac{n\cdot2^n-1-s}{n\cdot2^i}}
    > s+\sum_{i\leq n}\frac{n\cdot2^n-1-s}{n}
    = \sum_{i\leq n}\frac{n\cdot2^n-1}n
    = n\cdot2^n-1
\]
and from the integrality of both sides it follows that $f(s) \geq n\cdot2^n$.
However, for $s = n\cdot2^n-1$ we have
\[
    f(s)
    = s+\sum_{i\leq n}2^i\ceil*{\frac{n\cdot2^n-1-s}{n\cdot2^i}}
    = n\cdot2^n-1
    < n\cdot2^n
\]
which reveals $\lcm_{i\leq n}a_i-1$ as the smallest optimal solution.

\end{document}